\documentclass[sigconf, nonacm]{acmart}

\usepackage{array}
\usepackage{booktabs}      
\usepackage{pgf}
\usepackage{tabularx}      
\usepackage{subcaption}    
\usepackage{graphicx}      
\usepackage{cleveref}
\usepackage{algorithm}
\usepackage{algpseudocode}
\usepackage{multirow}
\usepackage{tikz}
\usetikzlibrary{arrows.meta,positioning,fit,calc,decorations.pathreplacing, patterns, backgrounds}
\usepackage{xcolor}
\usepackage{pifont}  
\usepackage{listings}
\usepackage{setspace}
\usepackage{subcaption}
\usepackage{comment}

\definecolor{nousblue}{HTML}{2E5EAA}
\definecolor{nousgray}{HTML}{6B7280}
\definecolor{nousgreen}{HTML}{059669}
\definecolor{nousred}{HTML}{DC2626}
\definecolor{nousorange}{HTML}{D97706}
\definecolor{nouspurple}{HTML}{7C3AED}

\newcommand{\squishlist}{
 \begin{list}{$\bullet$}
  { \setlength{\itemsep}{0pt}
     \setlength{\parsep}{3pt}
     \setlength{\topsep}{3pt}
     \setlength{\partopsep}{0pt}
     \setlength{\leftmargin}{1.5em}
     \setlength{\labelwidth}{1em}
     \setlength{\labelsep}{0.5em} } }

\newcommand{\squishend}{
  \end{list}  }

\newcommand{\taumax}{\tau_{\max}}

\newcommand{\Cmax}{C_{\max}}

\newcommand{\sch}{\mathrm{sch}}

\definecolor{Georgia}{RGB}{112, 95, 214}
\definecolor{Chris}{RGB}{50, 168, 90}

\newtheorem{property}{\small Property}[subsection]

\providecommand{\attr}[1]{\mathbf{#1}}                     
\providecommand{\sch}{\mathrm{sch}}                        
\providecommand{\ext}{\varepsilon}                         
\providecommand{\semjoin}{\mathbin{\Join^{\mathrm{sem}}}}  

\providecommand{\roleR}{\textbf{Rewrite}}    
\providecommand{\roleX}{\textbf{Exec.}}      
\providecommand{\roleC}{\textbf{Cost}}       
\providecommand{\roleF}{\textit{Future}}     

\begin{document}
\title{NousDB - A Fast Query Engine for Semantic Analytics}





\author{Christos Tsapelas}
\affiliation{%
  \institution{Department of
Informatics and Telecommunications, National and Kapodistrian University
of Athens \newline Archimedes, Athena Research Center}
   \city{Athens}
  \country{Greece}
}
\email{ctsapelas@athenarc.gr}
\email{ctsapelas@di.uoa.gr}

\author{Georgia Koutrika}
\affiliation{%
  \institution{Archimedes, Athena Research Center}
   \city{Athens}
  \country{Greece}
}
\email{georgia@athenarc.gr}

\author{Yannis Ioannidis}
\affiliation{%
  \institution{Department of
Informatics and Telecommunications, National and Kapodistrian University
of Athens \newline Archimedes, Athena Research Center}
 \city{Athens}
  \country{Greece}
}
\email{yannis@athenarc.gr}
\email{yannis@di.uoa.gr}

\begin{abstract}
Semantic data analytics systems integrate AI-powered operators
into query pipelines, enabling declarative queries over structured
and unstructured data. 
However, existing systems rely on per-tuple model invocations, ad hoc operator designs, and lack principled cost models, resulting in poor scalability and limited optimization.
We present NousDB, a high-performance semantic analytics engine built on Apache DataFusion\cite{10.1145/3626246.3653368}.Through NousDB we propose an extended relational algebra with three semantic operators and formal algebraic properties that enable principled query rewriting and cross-operator optimization. 
We further propose a token-based cost model using equi-depth token histograms to optimize batch sizing under context-window constraints and provide statistical guarantees for safe batching. 
Experiments show that NousDB achieves up to $\times 86$ speedups with both local and enterprise LLMs, while leading competitors in token efficiency and accuracy. These gains extend to complex JOB and Stack workloads augmented with semantic operators, supporting both traditional and semantic analytics.

\end{abstract}

\maketitle


\section{Introduction} Foundation models~\cite{openai2024gpt4technicalreport, touvron2023llama, anthropic2024claude} have showcased strong capabilities in processing unstructured data, such as language and images.
This new form of data processing has given rise to semantic query engines (SQEs), which treat models as query operators and allow users to filter, join, extract information from, and summarize textual content in combination with conventional relational operations.
However, they challenge
several fundamental assumptions underlying database systems, including operator semantics, cost estimation, and parallel execution, among others.
 
\noindent\textbf{Motivating example.}
Consider the query ``which actors are mentioned in movie reviews?'' over the IMDB schema extended with a relation of reviews. 
\Cref{fig:running-example} shows this query, denoted Q11 in our evaluation, expressed in NousDB. Its natural execution plan first extends each review with the actors mentioned in its text, join the result with the \textsf{name} relation on whether the extracted mention and a person's name refer to the same actor, and projects the review, movie title and actor name.
A naive execution invokes the model once per review for extraction and once per review-name pair for matching, with each invocation processing only one tuple or tuple pair.
Avoiding this costly tuple-at-a-time execution requires the planner to reason about semantic operators, for which existing SQEs provide insufficient foundations.
Using this query as a running example, we next identify the following problems.


\definecolor{kw}{HTML}{1A7F37}       
\definecolor{str}{HTML}{CF222E}      
\definecolor{cmt}{HTML}{6E7781}      
\definecolor{fn}{HTML}{8250DF}       
\definecolor{tp}{HTML}{0550AE}       
\definecolor{bg}{HTML}{F6F8FA}       
\definecolor{frame}{HTML}{D0D7DE}    

\lstdefinelanguage{NousDBRust}{
  morekeywords=[1]{let,fn,pub,async,await,use,mod,struct,impl,trait,where,
                   for,in,if,else,match,return,mut,ref,move,dyn,Arc},
  morekeywords=[2]{extend,semantic_join,select_columns,nousdb_sql,collect,show},
  morekeywords=[3]{Utf8,DataType,NousDBContext,NousDBDataFrame},
  sensitive=true,
  morecomment=[l]{//},
  morecomment=[s]{/*}{*/},
  morestring=[b]",
}
\lstdefinestyle{nousdb}{
  language=NousDBRust,
  basicstyle=\ttfamily\scriptsize,
  keywordstyle=[1]\bfseries\color{kw},
  keywordstyle=[2]\bfseries\color{fn},
  keywordstyle=[3]\color{tp},
  commentstyle=\itshape\color{cmt},
  stringstyle=\color{str},
  backgroundcolor=\color{bg},
  frame=tb,
  rulecolor=\color{frame},
  framesep=2pt,
  xleftmargin=3pt,
  xrightmargin=3pt,
  numbers=left,
  numberstyle=\tiny\color{cmt},
  numbersep=2pt,
  breaklines=true,
  columns=fullflexible,
  keepspaces=true,
  showstringspaces=false,
  tabsize=2,
  aboveskip=2pt,
  belowskip=2pt,
  lineskip=-0.5pt,
  literate={->}{{$\to$}}1 {=>}{{$\Rightarrow$}}1,
}

\begin{figure}[t]
\centering
\begin{lstlisting}[style=nousdb]
// Q11: which actors are mentioned in movie reviews?
let reviews = ctx.nousdb_sql("SELECT * FROM reviews")
  .extend(("mentioned_actor", Utf8), model,
          "Extract the actor name mentioned in this review", &["review"]);
let names = ctx.nousdb_sql("SELECT * FROM name");
let result = reviews
  .semantic_join(names, &["mentioned_actor"], &["name"], model,
                 "Does the extracted actor match this person?")
  .select_columns(&["review", "title", "name"]);
result.show().await?;
\end{lstlisting}
\vspace{-14pt}
\caption{\footnotesize NousDB example program.}
\vspace{-16pt}
\label{fig:running-example}
\end{figure}

\noindent\textbf{Problem 1: Semantic operators without semantics.}
Existing SQEs expose semantic operators to the query planner as opaque model invocations that use task-specific prompts~\cite{lotus,palimpzest,blendsql}.
This limitation hinders query optimization in SQEs because it prevents the definition of algebraic properties to reason about result-preserving rewrites.
In our example, treating semantic operators as black boxes prevents the planner from determining whether it can safely push a selective filter on \textsf{name} below the semantic join, fuse mention extraction and matching into a single model invocation, or partition the work across threads.
Missing such rewrites can incur substantial model work, often linear in the input cardinality and far costlier per invocation than conventional tuple comparisons.

\textit{Challenge 1: Defining semantics that enable sound plan transformations}. A semantic operator is jointly determined by its model and prompt, may behave stochastically, and may be undefined for some inputs.
Classical relational-algebra properties therefore cannot be assumed. The challenge is to define semantics that make these behaviors explicit and identify which properties, and thus which plan transformations, are valid under which assumptions. 

\noindent\textbf{Problem 2: Finite context windows.} Singleton calls underutilize the context window.
 In our running example, invoking the model separately for every review and review--name pair repeatedly pays the fixed cost of the prompt while leaving most of the context window unused. Batching multiple tuples into each invocation amortizes this cost and reduces the number of model calls,  but risks exceeding the model's capacity.  Thus, batching must respect a hard token budget determined by the prompt and the model's context window, and  
  becomes a critical optimization decision. 

\textit{Challenge 2: Token-distribution estimation.} 
Enforcing this budget requires estimating the token footprint of the input tuples without tokenizing the entire input in advance. The engine therefore needs compact per-column token statistics that can be collected and maintained at negligible time and space overhead. 

\textit{Challenge 3: Probabilistic batch sizing.} Given these statistics and the available token budget, the engine must select the largest batch that fits with high probability. The sizing algorithm must balance invocation efficiency against the risk of truncation or rejection, particularly under skewed token-length distributions.

 \noindent\textbf{Problem 3: Over-batching can degrade result quality.} 
 Although a batch may fit within the model’s context window, batching can change the output produced for each tuple relative to singleton execution due to  leakage from neighboring tuples, tuple position in the batch, or formatting errors that misalign inputs and outputs.   In our experiments, batching without checking output fidelity reduces F1 by up to 0.42 for a selection query with Qwen3-8B. The engine must therefore determine the largest batch size that preserves fidelity, meaning agreement with singleton execution, rather than merely the largest batch that fits within the context window.


\textit{Challenge 4: Safe batch sizes cannot be predicted a priori.} The batch size at which output quality degrades depends on the model, prompt, and data. At query time, fidelity must be assessed without ground truth or independence assumptions and at a cost substantially lower than the execution being validated.
\noindent\textbf{Problem 3: Token-skew in row-balanced partitions.}
Classical engines distribute rows evenly across workers, assuming similar per-row costs. 
 In SQEs, cost depends on token volume, so row-balanced partitions may have highly unequal loads. In our example, a partition containing longer reviews carries several times the token load of its peers, becoming a straggler while other workers sit idle. We call this imbalance \emph{token skew}. Existing SQEs inherit row-based partitioning from their host engines and do not account for it.

 
 \textit{Challenge 5: Token-aware partitioning.}  The challenge is to detect token skew and partition tuples by expected processing cost rather than count, while keeping repartitioning overhead below the gains from improved load balance.

\textbf{NousDB Contributions.}
 NousDB addresses all four problems. It defines semantic operators declaratively as relation-to-relation functions parameterized by a model and a prompt.
 These definitions provide engine-independent reference semantics for evaluating implementations and reasoning about plan equivalence.
 We define these semantics at the tuple level, separating \emph{what} an operator computes from \emph{how} its inputs are grouped into model invocations. 
 Singleton execution, in which each tuple is evaluated independently, serves as the reference semantics. 
 Batching is therefore a physical execution choice whose fidelity can be evaluated against this reference, rather than an implicit change to the operator's meaning. 
  NousDB is built as an extension of Apache DataFusion~\cite{10.1145/3626246.3653368} over Apache Arrow's~\cite{apache_arrow} columnar format (\Cref{fig:system-architecture}). 
  Its semantic operators are first-class DataFusion operators, enabling vectorized, columnar, and partition-parallel execution.  
In a nutshell, NousDB makes the following contributions:
\squishlist

\item \textbf{An algebra for semantic operators.} We define three tuple-level semantic operators (Extend, Selection, Join) under an explicit oracle assumption, which allows us to derive a set of algebraic properties, along with four rewrite rules, two pushdowns and two fusions, that NousDB's optimizer applies.

\item \textbf{Token-aware batch sizing.}
We introduce equi-depth token histograms as the catalog statistic of token-priced operators, an iterative algorithm that maximizes $k$ subject to an overflow probability $\alpha$, and batched streaming operators that keep a window of packed requests in flight.

\item \textbf{A fidelity certificate.} We define \emph{batch fidelity} as the probability that a tuple has the same output in a batch as in a singleton invocation and certify it at query time with a one-probe-per-batch calibration that restores independent samples and an exact Clopper--Pearson test.
In query execution, certified runs agree with singleton execution on at least 86\% of the tuples of every query.

\item \textbf{Token-aware repartitioning.}
We implement a token repartition operator with list-scheduling and zigzag assignment, a corrected balance guarantee, and a wave-barrier protocol that embeds it in DataFusion execution engine.

\item \textbf{Strong performance.} On IMDB and SemBench, against LOTUS, Palimpzest, Abacus and per-row DataFusion, with a local (Qwen3-8B) and a hosted (GPT-5.4-nano) model, NousDB is the fastest system, up to $86\times$, uses up to $19\times$ fewer tokens, and lies on the quality--runtime Pareto frontier.
On JOB and STACK augmented with semantic operators (43 queries, up to 17 relations) it keeps its lead, up to $35\times$.
\squishend


\section{Related Work} 

SQEs expose LLM-powered operators next to relational ones \cite{thalamusdb1, wannadb1, andb, aryn, agenticdata, docdb, docetl, doctopus, eleet1, exasol, lotus, palimpzest, swelldb, unify, tabagent, weaver, scaledoc, zendb, blendsql}.
ThalamusDB \cite{thalamusdb1, thalamusdb2} answers SQL queries with NL predicates over multi-modal data, WannaDB \cite{wannadb1, wannadb2} and ELEET \cite{eleet1, eleet2} extract structure from text, and ZenDB \cite{zendb} and DocETL \cite{docetl} target document analytics.
Palimpzest \cite{palimpzest}, optimized by Abacus \cite{abacus}, and LOTUS \cite{lotus, tag} offer declarative semantic operators, UQE \cite{uqe} and BlendSQL \cite{blendsql} embed them in SQL, and GALOIS \cite{Galois} and XMODE \cite{xmode} query or orchestrate the LLM itself.
Benchmarks remain few \cite{tag, deepscholar, swan}; the broadest, SemBench \cite{sembench}, shows that operators sharing a name across systems, such as \textsf{sem\_filter} and \textsf{sem\_join}, do not share semantics.
In all of them a semantic operator is an opaque model call.
NousDB instead defines semantic operators that compose with the relational ones, establishes the algebraic properties that rewriting requires, and measures cost in tokens rather than rows, which exposes \emph{batch sizing}, \emph{batch fidelity} and \emph{repartitioning} as new optimization problems.

\noindent\textbf{Batching and guarantees.}
Batch prompting amortizes instruction tokens over the inputs of one prompt \cite{Cheng2023BatchPE, lin2024batchprompt}, at a quality cost reported as discounted output \cite{Liu2024CliqueParcelAA} and cross-query interference \cite{Xu2026TowardsCL}.
Statistical guarantees for LLM-powered operators \cite{approximate_selection_with_guarantees, 10.1145/3769776, lotus} bound a cheap proxy against an oracle to decide \emph{which} tuples the expensive model must see.
NousDB's certificate is orthogonal: it bounds the fidelity of batched execution to singleton execution (\Cref{def:invariance}) to decide \emph{how many} tuples may share one call.

\section{NousDB Algebra} 

 \definecolor{layerAPI}{HTML}{DCEEFB}
\definecolor{layerPlanner}{HTML}{FDE8D0}
\definecolor{layerOptimizer}{HTML}{D5F5E3}
\definecolor{layerExecutor}{HTML}{F5D5E8}
\definecolor{layerFoundation}{HTML}{E8E8E8}
\definecolor{compFill}{HTML}{FFFFFF}
\definecolor{borderBlue}{HTML}{2E86C1}
\definecolor{borderOrange}{HTML}{E67E22}
\definecolor{borderGreen}{HTML}{27AE60}
\definecolor{borderPink}{HTML}{C0392B}
\definecolor{borderGray}{HTML}{7F8C8D}
\definecolor{nousAccent}{HTML}{2E86C1}
\definecolor{dfAccent}{HTML}{0D0313}

\begin{figure}
\centering
\resizebox{\columnwidth}{90pt}{%
  \tikzset{
    comp/.style={
      draw=#1, fill=compFill, rounded corners=2pt, thick,
      font=\strut, minimum height=0.6cm, inner sep=4pt
    },
    compthin/.style={
      draw=#1, fill=compFill, rounded corners=2pt,
      font=\strut, minimum height=0.5cm, inner sep=3pt
    },
    layerlabel/.style={
      font=\bfseries, anchor=north west
    },
    ann/.style={
      font=\footnotesize\itshape, color=gray!70!black
    },
  }
  \begin{tikzpicture}[node distance=0.2cm]

  \def\totalW{12.5cm}
  \def\layergap{0.3cm}
  \def\narrowW{5cm}
  \def\wideW{7.2cm}
  \def\rowHeight{2cm}

  \node[rounded corners=3pt, fill=layerAPI, draw=borderBlue, thick,
        minimum width=\narrowW, minimum height=1.7cm, anchor=north west]
    (api) at (0,0) {};
  \node[layerlabel, color=borderBlue] at ([shift={(2pt,-2pt)}] api.north west)
    {(1) User-Facing API};
  \node[comp=borderBlue] at ([yshift=-0.15cm] api.center)
    (ctx) {\texttt{NousDBContext}};
  \node[ann, below=0pt of ctx] {wraps \texttt{SessionContext}};

  \node[rounded corners=3pt, fill=layerPlanner, draw=borderOrange, thick,
        minimum width=\wideW, minimum height=1.7cm, anchor=north west]
    (planner) at ([xshift=\layergap] api.north east) {};
  \node[layerlabel, color=borderOrange] at ([shift={(2pt,-2pt)}] planner.north west)
    {(2) Planner};
  \node[comp=borderOrange]
    at ([shift={(-1.85cm,-0.15cm)}] planner.center)
    (qp) {\texttt{NousDBQueryPlanner}};
  \node[comp=borderOrange, right=0.1cm of qp]
    (pp) {\texttt{NousDBPhysicalPlanner}};
  \node[ann, below=0pt of qp] {DataFusion};
  \node[ann, below=0pt of pp] {extension};

  \draw[-Stealth, thick, color=gray!80] (api.east) -- (planner.west);

  \node[rounded corners=3pt, fill=layerOptimizer, draw=borderGreen, thick,
        minimum width=\wideW, minimum height=\rowHeight, anchor=north west]
    (optimizer) at ([yshift=-0.1cm] api.south west) {};
  \node[layerlabel, color=borderGreen] at ([shift={(2pt,-2pt)}] optimizer.north west)
    {(3) Optimizer (4 rules)};

  \node[comp=borderGreen, minimum width=3.1cm, font=\small\strut, anchor=east]
    at ([shift={(-0.08cm,-0.95cm)}] optimizer.north)
    (r1) {\ding{192}\;\textsc{SemRewriting}};
  \node[comp=borderGreen, minimum width=3.1cm, font=\small\strut, anchor=west]
    at ([shift={(0.08cm,-0.95cm)}] optimizer.north)
    (r2) {\ding{193}\;\textsc{TokenRepart}};
  \node[comp=borderGreen, minimum width=3.1cm, font=\small\strut, below=0.08cm of r1]
    (r3) {\ding{194}\;\textsc{HistInjection}};
  \node[comp=borderGreen, minimum width=3.1cm, font=\small\strut, below=0.08cm of r2]
    (r4) {\ding{195}\;\textsc{BatchSizeOpt}};

  \node[rounded corners=3pt, fill=layerExecutor, draw=borderPink, thick,
        minimum width=\narrowW, minimum height=\rowHeight, anchor=north west]
    (executor) at ([xshift=\layergap] optimizer.north east) {};
  \node[layerlabel, color=borderPink] at ([shift={(2pt,-2pt)}] executor.north west)
    {(4) Executor};

  \node[compthin=borderPink, minimum width=2.1cm, anchor=east]
    at ([shift={(-0.08cm,-0.95cm)}] executor.north)
    (op1) {$\varepsilon$\;\textbf{Extend}};
  \node[compthin=borderPink, minimum width=2.1cm, anchor=west]
    at ([shift={(0.08cm,-0.95cm)}] executor.north)
    (op2) {$\sigma^{\text{s}}$\;\textbf{Select}};
  \node[compthin=borderPink, minimum width=2.1cm, below=0.08cm of op1]
    (op3) {$\Join^{\text{s}}$\;\textbf{Join}};
  \node[compthin=borderPink, minimum width=2.1cm, below=0.08cm of op2]
    (op4) {\textbf{TokenRepart}};

  \draw[-Stealth, thick, color=gray!80] (optimizer.east) -- (executor.west);

  \node[rounded corners=3pt, fill=layerFoundation, draw=borderGray, thick,
        minimum width=\totalW, minimum height=0.6cm, anchor=north west]
    (foundation) at ([yshift=-0.1cm] optimizer.south west) {};
  \node[font=\bfseries, color=borderGray, anchor=center]
    at (foundation.center)
    {Apache DataFusion\;\;$\cdot$\;\;Arrow};

  \begin{scope}[on background layer]
    \draw[decorate, decoration={brace, amplitude=6pt, mirror},
          ultra thick, nousAccent]
      ([xshift=-10pt] api.north west) -- ([xshift=-10pt] optimizer.south west)
      node[midway, left=9pt, font=\bfseries, color=nousAccent,
           rotate=90, anchor=south]
      {NousDB};

    \draw[decorate, decoration={brace, amplitude=5pt, mirror},
          thick, dfAccent]
      ([xshift=-10pt] foundation.north west) -- ([xshift=-10pt] foundation.south west)
      node[midway, left=8pt, font=\bfseries, color=dfAccent,
           rotate=90, anchor=south]
      {DataFusion};
  \end{scope}

  \end{tikzpicture}%
}
\vspace{-16pt}
\caption{\small NousDB system architecture.}
\label{fig:system-architecture}
\vspace{-10pt}
\end{figure}

\subsection{NousDB Model}
In the relational model, a relation $R$ has a set $\mathbf{A}$ of attributes, each $A_i\in\mathbf{A}$ with a domain $Dom(A_i)$ (numeric, boolean or categorical); for a tuple $r$, $r[A_i]$ is the value of $A_i$ in $r$, and for $\mathbf{B}\subseteq\mathbf{A}$, $r[\mathbf{B}]$ denotes the projection of $r$ on $\mathbf{B}$.
Relational systems also admit unstructured attributes through general-purpose types such as \texttt{TEXT} or \texttt{BLOB}, but these remain opaque to query semantics: no relational operator or optimizer can reason about their content.
NousDB lifts this limitation by drawing models from a registered pool $\mathcal{M}$ of foundation models (LLMs, VLMs) that interpret and execute natural-language instructions over their inputs.

A model $m\in\mathcal{M}$ maps input values to output values, $m:\prod_{i} Dom(X_i)\to\prod_{j} Dom(Y_j)$, and is steered by a prompt. 
We define a \emph{prompt constructor} \(pc\) from an operator-specific instruction template \(pt\). 
Given an input $x$, \(pc\) instantiates \(pt\) with the relevant attribute values from \(x\), producing the prompt \(p=pc(x)\) submitted to \(m\).
Combining a prompt constructor with the model gives a \emph{prompted model invocation} (simply  model invocation hereafter) through which our semantic operators compose model inference with classical query processing. 

\textbf{Singleton semantics.} We define the reference semantics of semantic operators in terms of \emph{singleton model invocations}: each input tuple, or tuple pair for a join, is evaluated independently.

\textbf{Deterministic-oracle assumption.} Repeated model invocations with the same prompt and input may produce different outputs. We abstract away this stochasticity by assuming that, for a fixed model \(m\), prompt \(p\), and input \(x\), every invocation of \(m(p,x)\) returns the same output. Removing model stochasticity from the logical semantics  allows us to reason about semantic operators using standard set-theoretic arguments. It is analogous to simplifying assumptions used by relational optimizers, such as uniformity and statistical independence, which enable selectivity estimates to compose across predicates~\cite{10.14778/2850583.2850594}.
\footnote{Implementations may reduce output variability through deterministic inference configurations, such as model temperature settings and constrained decoding strategies, but the assumption remains an abstraction used to define the logical semantics.}

The logical semantics impose these requirements: 
\squishlist 
\item \emph{Model totality.} The invocation \(m(p,x)\) is defined and produces a well-formed output for every input \(x\) in the operator's domain. 
\item \emph{Tuple-local prompt construction.} Under singleton execution, the prompt constructor is deterministic and uses only the declared input attributes of the tuple \(r\). Thus, for a fixed template and tuple, it always produces the same prompt. 
\squishend

Together, singleton execution, deterministic-oracle evaluation, and tuple-local prompt construction provide \emph{input locality}: the output associated with a tuple depends only on its declared input attributes and not on other tuples, the container relation, its evaluation order, or its assigned partition. Consequently, each NousDB semantic operator, \textit{Extend}, \textit{Semantic Selection}, and \textit{Semantic Join} that are presented next, defines a well-defined function from relations to relations. Furthermore, NousDB logical semantics allows us to define algebraic properties and rewrite rules.

\subsection{NousDB Query Operators}
\label{sem_ops}


\subsubsection{Extend Operator} \label{definition:extend}

The Extend operator, $\varepsilon_{\mathbf{E}, \mathbf{B}, m, p}(R)$, extends a relation $R$ with a set $\mathbf{E}$ of  attributes that are computed by a model $m$ over \textit{R}'s attributes, given a task instruction prompt $p$.

\begin{definition}[Extend Operator] \label{extend_op}
Let \textit{R} be an input relation, $\mathbf{B}$ a subset  of $R$'s attributes, $\mathbf{E}=\{E_1,\dots,E_k\}$  the desired \emph{extended attributes}, $m$ a model and $p$ the final prompt to the model.

The \emph{Extend operator}, $\varepsilon_{\mathbf{E}, \mathbf{B}, m, p}(R)$, is a relation defined as:
\[
\varepsilon_{\mathbf{E}, \mathbf{B}, m, p}(R) = \left\{ (r, e_1 \dots, e_k) \,\middle|\, \forall r \in R,\;  (e_1, \dots, e_k) = m(p, r[\mathbf{B}]) \right\}
\]

\end{definition}


\subsubsection{Semantic Selection} \label{definition:selection}
In traditional relational algebra, the selection operator $\sigma_\theta(R)$ returns all tuples in a relation $R$ that satisfy a boolean condition $\theta$ defined over $R$'s attributes. 
However, $\theta$ is restricted to comparisons between attribute values  (e.g., $=$, $<$, $\neq$).
NousDB extends the expressive power of selection by allowing a model $m \in \mathcal{M}$ to evaluate the filtering condition.

\begin{definition}[Semantic Selection] \label{selection_op}
Let \textit{R} be an input relation, and $\mathbf{B}$ a subset  of $R$'s attributes.
  Let $mp_m$ be a \emph{model-based selection predicate} defined as a filtering condition in a formal or natural language description, evaluated by a model $m$.
  $mp_m$ is defined as:
    \[
        mp_m \;:\; \operatorname{\textit{Dom}}(\mathbf{X}) \;\to\; \{{true},{false}\}, \quad mp_m(x) = m(p,x), \quad x \in \mathbf{X}
    \]
  
%
  The \emph{semantic selection} of $R$ on attributes $\mathbf{B}$, given a model-based selection predicate $mp_m$, is the relation:
  \[
  \sigma^{\text{sem}}_{\mathbf{B},\,m,\,p}(R)
  \;=\;
  \bigl\{\, r \in R
    \;\big|\;
    mp_m\bigl(r[\mathbf{B}]\bigr) = true
  \,\bigr\}.
\]

\end{definition}

\subsubsection{Semantic Join Operator} \label{definition:join}
The join operator in relational algebra combines tuples from two relations when they satisfy a join condition $\theta$.  
This join condition $\theta$ defines the criteria under which tuples from the input relations are matched, involving comparisons in the form of equality, inequality, or ordering relations between the join attributes of each relation.
NousDB redefines join semantics by replacing traditional predicate evaluation with model-based inference. In NousDB, a join condition is evaluated by a model $m \in \mathcal{M}$, enabling joins to capture semantic relationships across both structured and unstructured data.


\begin{definition}[Semantic Join] \label{join_op}
Let $R$ and $S$ be two input relations.
Let $m$ be a model with prompt $p$. The \textit{model-based join predicate} $\theta_m$ is defined as:
\(
\theta_m(x, y):\operatorname{\textit{Dom}}(\mathbf{X})\times\operatorname{\textit{Dom}}(\mathbf{Y}) \rightarrow \{{true},{false}\}, \)
\[\theta_m(x, y) = m(p, x, y), x \in \mathbf{X}, y \in \mathbf{Y}
\]
Let $\mathbf{B}$ and $\mathbf{C}$ be the join attribute sets from $R$ and $S$ respectively, specifying which attributes from each relation participate in the join condition. 
The semantic join of $R$ and $S$ on attribute sets $\mathbf{B}$ and $\mathbf{C}$ given a model-based join condition $\theta_m$ is defined as:
\[
        R \Join^{sem}_{\mathbf{B},\mathbf{C},m, p} S = 
            \bigl\{(r,s)\,\bigl|\;
            \forall r \in R,\;\forall s \in S, \\[-2pt]
             \theta_m \bigl( r[\mathbf{B}], \; s[\mathbf{C}] \bigr) = {true}\bigr\}.
\]


\end{definition}

The model-based join condition $\theta_m$ can be evaluated by any model $m$ that decides whether two tuples should be joined. 
In all cases, the model returns a boolean outcome used to evaluate $\theta_m$.




\vspace{5pt}

\noindent\textbf{Semantic Operator Properties}.\label{definition:properties}  Under the NousDB model, we can define a set of algebraic properties, emerging from the definitions of the semantic operators.
Table~\ref{tab:properties} summarizes all eighteen properties and marks which ones the current implementation NousDB supports. 
Three (E1, S1, J1) are premises of the four rewrite rules of \Cref{subsec:plan_rewrites}, i.e., selection pushdown through Extend and through Semantic Join, and Extend--Selection and Extend--Join
fusion, seven justify the execution and costing mechanisms of Section~4 and the remaining eight are left for future work.

\begin{table}[t]
\caption{\small Properties of the NousDB operators.
Role: \roleR, premise of a rule of \Cref{subsec:plan_rewrites}.
\roleX, justifies an execution mechanism, \roleF, not exploited by the current optimizer.}
\label{tab:properties}
\vspace{-10pt}
\scriptsize
\setlength{\tabcolsep}{3pt}
\begin{tabularx}{\columnwidth}
{@{}l>{\raggedright\arraybackslash}X
  >{\raggedright\arraybackslash}p{3.2cm}@{}}
\toprule
\textbf{\#} & \textbf{Property} &
\textbf{Role in NousDB}\\
\midrule
E1 & $\sch(\ext(R))=\sch(R)\cup\attr E$;\; $\pi_{\sch(R)}(\ext(R))=R$
   & \roleR: premise of \ref{prop:sel-push-extend},
     \ref{prop:ext-sel-fusion}, \ref{prop:ext-join-fusion}\\
E2 & $|\ext(R)|=|R|$
   & totality: $\lceil |R|/k\rceil$ calls, $|R|$ rows\\
E3 & $R\subseteq S\Rightarrow \ext(R)\subseteq\ext(S)$
   & \roleX: pipelining; \texttt{LIMIT} early stop\\
E4 & $\ext(R\cup S)=\ext(R)\cup\ext(S)$
   & \roleX: (re)partitioning, batching\\
E5 & $\ext(R)=\ext_{\pi(n)}\circ\dots\circ\ext_{\pi(1)}(R)$ for any
     permutation $\pi$, where
     $\ext_i=\ext_{\attr E_i,\attr B,m,p_i}$,
     $\attr E=\biguplus_i\attr E_i$
   & \roleF: split or merge multi-attribute Extends\\
\midrule
S1 & $\sch(\sigma(R))=\sch(R)$;\; $\sigma(R)\subseteq R$
   & \roleR: premise of \ref{prop:sel-push-extend},
     \ref{prop:sel-push-join}\\
S2 & $R\subseteq S\Rightarrow\sigma(R)\subseteq\sigma(S)$
   & \roleX: early stop under \texttt{LIMIT}\\
S3 & $\sigma(R\cup S)=\sigma(R)\cup\sigma(S)$
   & \roleX: (re)partitioning, batching\\
S4 & $\sigma(R\cap S)=\sigma(R)\cap\sigma(S)$;\;
     $\sigma(R\setminus S)=\sigma(R)\setminus S=\sigma(R)\setminus\sigma(S)$
   & --\\
S5 & $\sigma(\sigma(R))=\sigma(R)$
   & --\\
S6 & $\sigma_1(\sigma_2(R))=\sigma_2(\sigma_1(R))
     ={\{r\in R\mid P_1(r)\wedge P_2(r)\}}$
   & \roleF: cost-based ordering of selections\\
\midrule
J1 & $\sch(R\semjoin S)=\sch(R)\cup\sch(S)$;\;
     $R\semjoin S\subseteq R\times S$
   & \roleR: premise of \ref{prop:sel-push-join},
     \ref{prop:ext-join-fusion}; \roleC:~${\le}\,|R|\,|S|$~pairs\\
J2 & $R_1\subseteq R_2,\ S_1\subseteq S_2\Rightarrow
     R_1\semjoin S_1\subseteq R_2\semjoin S_2$
   & \roleX: early stop under \texttt{LIMIT}\\
J3 & $(R_1\cup R_2)\semjoin S=(R_1\semjoin S)\cup(R_2\semjoin S)$, also
     in $S$; so $R\semjoin S=\bigcup_{i,j}R_i\semjoin S_j$
   & \roleX: block nested loop over $b_L\times b_R$ blocks;
     repartitioning\\
J4 & As J3, for $\cap$ and $\setminus$
   & --\\
J5 & $R\semjoin_{\attr B,\attr C,m,p}S=S\semjoin_{\attr C,\attr B,m,p}R$
   & \roleF: choice of the outer input\\
J6 & $(R\semjoin_1 S)\semjoin_2 T=R\semjoin_1(S\semjoin_2 T)$
   & \roleF: semantic-join reordering\\
J7 & $\pi_{\attr X}(R\semjoin S)
     =\pi_{\attr X\cap\sch(R)}(R)\semjoin\pi_{\attr X\cap\sch(S)}(S)$
   & \roleF: projection pushdown below~$\semjoin$\\
\bottomrule
\end{tabularx}
\vspace{-10pt}
\end{table}





 \label{sec:data_model}
\section{NousDB Optimizer} 

To bridge DataFusion's row-oriented execution model with the token-oriented cost and behavior of model inference, NousDB introduces \textit{four} complementary optimizations:
(\textit{a})  \textit{Token-aware batch sizing} introducing equi-depth token histograms (Section \ref{subsec:token-histograms}) (\textit{b}) \textit{Batch-fidelity certification} for preserving output fidelity in batched execution (Section \ref{subsec:batch-optimization}), (\textit{c}) \textit{token repartition operator} for balancing token load across partitions (Section \ref{sec:token-repartitioning}), and (\textit{d}) 
 \textit{algebraic rewrite rules} to cut model invocations through semantic selection pushdown and operator fusion (Section \ref{subsec:plan_rewrites}).

 \begin{figure}[t]
\centering
\resizebox{\columnwidth}{!}{
    \begin{tikzpicture}[
        segment/.style={minimum height=0.75cm, font=\normalsize, align=center,
                        inner sep=2pt, anchor=west},
        brace/.style={decorate, decoration={brace, amplitude=5pt, mirror}},
        label/.style={font=\normalsize, text=nousgray}
    ]
        \def\barheight{0.3cm}
        \def\totalwidth{14cm}
    
        \node[segment, fill=nousred!20, minimum width=1.4cm] (sys) at (0,0)
            {$\tau_{\mathrm{sys}}$};
        \node[segment, fill=nousorange!20, minimum width=1.2cm, right=0pt of sys]
            (fixed) {$\tau_{\mathrm{fixed}}$};
        \node[segment, fill=nousblue!20, minimum width=2.0cm, right=0pt of fixed]
            (r1) {$\tau_I(r_1)$};
        \node[segment, fill=nousgray!15, minimum width=0.5cm, right=0pt of r1]
            (d1) {$\tau_\delta$};
        \node[segment, fill=nousblue!20, minimum width=2.0cm, right=0pt of d1]
            (r2) {$\tau_I(r_2)$};
        \node[segment, fill=nousgray!15, minimum width=0.5cm, right=0pt of r2]
            (d2) {$\tau_\delta$};
        \node[segment, fill=nousblue!15, minimum width=1.0cm, right=0pt of d2]
            (dots) {$\cdots$};
        \node[segment, fill=nousblue!20, minimum width=2.0cm, right=0pt of dots]
            (rk) {$\tau_I(r_k)$};
        \node[segment, fill=nousgreen!20, minimum width=2.4cm, right=0pt of rk]
            (out) {reserved output};
    
        \foreach \n in {sys,fixed,r1,d1,r2,d2,dots,rk,out} {
            \draw (\n.south west) rectangle (\n.north east);
        }
    
        \draw[brace] (sys.south west) -- node[below=6pt, label] {system and instructions}
            (fixed.south east);
        \draw[brace] (r1.south west) -- node[below=6pt, label]
            {$\sum_{i=1}^k \tau_I(r_i) + (k{-}1)\tau_\delta$} (rk.south east);
        \draw[brace] (out.south west) -- node[below=6pt, label]
            {$k \cdot \hat{\tau}_{\mathrm{out}}$} (out.south east);
    
        \draw[decorate, decoration={brace, amplitude=5pt}]
            (sys.north west) -- node[above=6pt]
            {$\Cmax(\mathcal{M})$} (out.north east);
    \end{tikzpicture}
    }
\vspace{-10pt}
\caption{\small LLM context window decomposition.
     System $\tau_{\mathrm{sys}}$ and user prompt $\tau_{\mathrm{fixed}}$ are followed by per-row input tokens $\tau_I(r_i)$ }
\label{fig:token-budget}
\vspace{-10pt}
\end{figure} \label{fig:context_window}

\textbf{LLM Context Window}. \label{sec:context-window} 
Every model $m$ imposes a maximum context window $C_{\max}(m) \in \mathbb{N}$,  bounding the total input plus output tokens that a single invocation can process.  
The input prompt comprises fixed overhead, a system prompt  ($\tau_{\mathit{sys}}$ tokens) and an instruction template ($\tau_{\mathit{inst}}(p)$ tokens), followed by the serialized input tuples, each separated by a delimiter  of $\tau_{\delta}$ tokens (Figure~\ref{fig:token-budget}).  
We denote by $\hat{\tau}_{\mathit{out}}$ the expected output tokens per processed tuple.

\begin{definition}[Batch Input Token Decomposition]
\label{def:batch-input}
The total input tokens for a batch $B$ with $|B| = k$ tuples decomposes as:
\begin{equation}
    \tau_{\text{in}}(B) = \underbrace{\tau_{\text{sys}}}_{\text{system prompt}} + \underbrace{\tau_{\text{inst}}(p)}_{\text{instruction}} + \underbrace{\sum_{r \in B} \tau_I(r)}_{\text{data tokens}} + \underbrace{(k-1) \cdot \tau_{\text{delim}}}_{\text{delimiters}}
\end{equation}
\end{definition}

%
%

Every batch must satisfy the \emph{token budget constraint}:
\begin{equation}
\label{eq:budget-constraint}
\tau_{\mathit{in}}(B) + k \cdot \hat{\tau}_{\mathit{out}} \;\leq\; C_{\max}(m)
\end{equation}
Since total output tokens cannot be known a priori, we reserve a fixed fraction 
of the context window for output predictions, in our case the $30\%$ of the context window.  
This constraint is the foundation of the batch size optimization: maximizing $k$ amortizes the fixed overhead 
$\tau_{\mathit{sys}}+\tau_{\mathit{inst}}(p)$ across more tuples, reducing both 
the number of LLM invocations and token consumption.


\subsection{Token Distributions and Token Histograms}
\label{subsec:token-histograms}



Since tokens are the fundamental unit of computation for LLMs, to capture the \emph{token count distribution}, we introduce \emph{token histograms}, adopting the abstract partitioning framework of Ioannidis et al. ~\cite{10.1145/235968.233342}.


Let $\mathcal{T}_m$ denote the tokenizer of model $m$, and let $\tau_m(s)=|\mathcal{T}_m(s)|$ be the number of tokens in string $s$ using $\mathcal{T}_m$. When $m$ is clear, we write $\tau$. 
For a tuple $r$ in relation $R$ with schema $\mathcal{A}$ and input attribute set $I \subseteq \mathcal{A}$ used for model inference, the \emph{tuple token count} is $ \tau_I(r) = \tau(r[I])$. 
For attribute $A$ in $R$, let $V_A = \{v_1, \ldots, v_D\}$ be the set of $D$ distinct values in $A$, with corresponding frequencies $f_1, \ldots, f_D$ where $f_i = |\{r \in R : r[A] = v_i\}|$. The \emph{token distribution} of $A$ is 
   $\mathcal{T}_A = \{(\tau(v_1), f_1), (\tau(v_2), f_2), \ldots, (\tau(v_D), f_D)\}$.

 \begin{definition}[Cumulative Token Distribution]
\label{def:cumulative-token-dist}
The \emph{cumulative token distribution} function $C_\tau: \mathbb{N} \rightarrow [0,1]$ gives the fraction of tuples with token count at most $t$:
\begin{equation}
    C_\tau(t) = \frac{1}{N} \sum_{i: \tau(v_i) \leq t} f_i
\end{equation}
where $N = \sum_{i=1}^{D} f_i = |R|$ is the total number of tuples.
\end{definition}

\label{subsubsec:foundations}

\begin{table}[t]
\scriptsize
\caption{\small Notation for batching and token repartitioning}
\vspace{-10pt}
\setlength{\tabcolsep}{3pt}
\renewcommand{\arraystretch}{1.1}
\begin{tabularx}{\columnwidth}
{@{}l
  >{\raggedright\arraybackslash}X@{}}
\toprule
\textbf{Symbol} &
\textbf{Meaning}\\
\midrule
$\tau_I(r)$ & token count of tuple $r$ over the input attributes $I$\\
$C_{\max}(m)$ & \emph{context window} of model $m$, input plus output tokens of one invocation\\
$\tau_{\text{sys}}$, $\tau_{\text{inst}}$ & tokens of the system prompt and the instruction template, paid once per invocation\\
$\tau_{\text{delim}}$, $\hat{\tau}_{\text{out}}$ & delimiter tokens and expected output tokens, per tuple\\
$H_\tau$, $\beta$ & equi-depth \emph{token histogram} of the input and its number of buckets (\Cref{subsec:histogram-def})\\
$\mu_\tau$, $Q_p$ & mean and $p$-quantile of the token count, read from $H_\tau$\\
$\alpha$ & \emph{overflow probability}, the chance that a batch exceeds $C_{\max}(m)$\\
$k_{H_\tau}$ & window-safe batch size (\Cref{alg:batch-size}), the largest candidate\\
\midrule
$y^{*}(r)$ & \emph{per-row output}, the output of tuple $r$ invoked alone\\
$y_B(r)$ & output of $r$ inside a batch $B$ of $k$ tuples (pairs, for a join \\
$\varphi(k)$ & \emph{batch fidelity} at size $k$, $\Pr[\,y_B(r) = y^{*}(r)\,]$\\
$\eta$, $\delta$ & \emph{tolerance} (fraction of tuples that may deviate from $y^{*}$) and \emph{confidence} ($1-\delta$)\\
\midrule
$\mathcal{K}$, $L$ & \emph{candidate sizes} $\{2^j : j \ge 1,\ 2^j < k_{H_\tau}\} \cup \{k_{H_\tau}\}$ and their number $|\mathcal{K}|$\\
$r_1,\dots,r_n$ & \emph{probe tuples}, a uniform sample of the input; $n$ is the \emph{probe budget}, Eq.~\eqref{eq:probe-budget}\\
$B_j$ & \emph{probe batch}, a batch that contains the probe tuple $r_j$ in a random slot\\
$X_k$ & \emph{agreement count}, the number of probes with $y_{B_j}(r_j) = y^{*}(r_j)$\\
\midrule
$\delta'$ & error budget of one candidate, $\delta/L$\\
$\underline{\varphi}_{\delta'}(X_k, n)$ & Clopper--Pearson lower confidence bound on $\varphi(k)$, Eq.~\eqref{eq:cp-bound}\\
$k_{safe}$ & \emph{fidelity-certified batch size}, the largest certified candidate\\
$k^*$ & executed batch size, $\min(k_{H_\tau}, k_{safe})$\\
\midrule
$P$, $R_j$ & number of partitions and the $j$-th partition of relation $R$\\
$\rho^{\mathcal{S}}_\tau$ & \emph{token repartition operator} with strategy $\mathcal{S}$, greedy or zigzag (\Cref{def:token-repartition})\\
$T(R_j)$ & token load of partition $R_j$, $\sum_{r \in R_j} \tau_I(r)$\\
$\lambda$ & \emph{imbalance ratio}, $\max_j T(R_j) / \min_j T(R_j)$\\
$\mathrm{CV}_\tau$ & token skew, the coefficient of variation $\sigma_\tau/\mu_\tau$ of the per-tuple token counts\\
$T_{\mathit{avg}}$, $\taumax$ & mean partition load and largest token count of a tuple\\
\bottomrule
\end{tabularx}

\label{tab:notation}\label{tab:fidelity-notation}
\vspace{-10pt}
\end{table}





\begin{definition}[Token Histogram]
\label{def:token-histogram}
A token histogram $H_\tau$ on attribute $A$ has $\beta \geq 1$ buckets and is constructed by sorting the token distribution $\mathcal{T}_A$ according to token counts and frequencies, partitioning it into $\beta$ disjoint buckets $\{B_1,\ldots,B_\beta\}$ subject to a partitioning constraint, and storing summary statistics for each bucket. \\
Each bucket $B_j$ stores: (\textit{a}) the minimum and maximum token counts, $\tau_{\min}(B_j)$ and $\tau_{\max}(B_j)$, resp. (\textit{b}) the number $n_j$ of distinct attribute values (\textit{c}) the total frequency $F_j = \sum_{v_i \in B_j} f_i$, i.e., the number of tuples represented, and (\textit{d}) the frequency-weighted mean token count $\bar{\tau}_j = \frac{1}{F_j}\sum_{v_i \in B_j} f_i \tau_A(v_i)$.
\end{definition}

\noindent The choice of partition constraint yields different histogram families. 
For NousDB we adopt equi-depth partitioning, where each bucket is constrained to contain approximately the same number of tuples.


\begin{definition}[Equi-Depth Token Histogram]
\label{def:equi-depth-histogram}
An equi-depth token histogram with $\beta$ buckets partitions the token distribution $\mathcal{T}_A$ such that each bucket contains approximately $N/\beta$ tuples, i.e.:  $F_j \approx N/\beta, j \in \{1, \ldots, \beta\}$.
The bucket boundaries are defined by the token count quantiles. Let $q_0 = 0$ and, for $j \in \{1, \ldots, \beta\}$:
\begin{equation}
q_j = \min\!\Bigl\{\tau \geq 0 \;:\; C_\tau(\tau) \geq \frac{j}{\beta}\Bigr\}\,,
\label{eq:bucket-boundary}
\end{equation}
where $C_\tau$ is the cumulative token distribution from \Cref{def:cumulative-token-dist}. Bucket $B_j$ contains all values $v$ with $q_{j-1} < \tau_A(v) \leq q_j$.
\end{definition}

We chose equi-depth histograms for three reasons: 

\noindent
\textbf{a) Quantile access.}
Batch size optimization (\Cref{subsec:batch-optimization}) requires quantile queries of the form ``what token count is exceeded by at most $\alpha\%$ of tuples?'' In an equi-depth histogram, the boundary $q_j$ between buckets $j$ and $j+1$ is precisely the $(j/\beta)$-quantile, so these queries reduce to bucket lookups with intra-bucket interpolation.

\noindent
\textbf{(b) Tail estimation.}
Exceeding the context limit $C_{\max}(m)$ causes runtime failure. Accurate tail-quantile estimates are therefore critical for preventing overflow. Equi-depth histograms support such estimates by isolating high-token-count values in dedicated bins.

\noindent
\textbf{(c) Sum estimation.}
For a batch of $k$ tuples, the cost model estimates $\sum_{i=1}^{k} \tau_I(r_i)$. Equi-depth histograms enable efficient sum estimation via quantile-based bounds, since each quantile range contributes proportionally to the total token mass.

Given an equi-depth histogram, we extract arbitrary quantiles via linear interpolation within the target bucket.

\begin{definition}[Quantile Extraction from Equi-Depth Histograms]
\label{def:quantile-extraction}
For an equi-depth histogram $H_\tau$ with $\beta$ buckets, the $p$-quantile $Q_p$ (for $p \in (0,1]$) is estimated as:
\begin{equation}
Q_p \;\approx\; \tau_{\min}(B_j) \;+\; \bigl(p \cdot \beta - (j-1)\bigr) \cdot \bigl(\tau_{\max}(B_j) - \tau_{\min}(B_j)\bigr)
\label{eq:quantile-extraction}
\end{equation}
where $j = \lceil p \cdot \beta \rceil$ is the index of the bucket containing the $p$-quantile. The expression $p \cdot \beta - (j-1) \in [0,1]$ is the fractional position within bucket~$B_j$, and the formula assumes a uniform distribution of token counts within each bucket.
\end{definition}

\textbf{Histogram Summary Statistics.}
\label{sec:hist-summary-stats} 
From the per-bucket statistics stored in $H_\tau$, we derive two global summary statistics that are consumed directly by the batch sizing algorithm (\Cref{subsec:batch-optimization}) and the token repartitioning operator (\Cref{sec:token-repartitioning}), namely  the \emph{mean token count} $\mu_\tau$ and \emph{total tokens} $T_{\text{total}}$:
\begin{equation}
\mu_\tau = \frac{1}{N} \sum_{j=1}^{\beta} F_j \cdot \bar{\tau}_j, \qquad T_{\text{total}} = N \cdot \mu_\tau
\label{eq:histogram-stats}
\end{equation}





The mean $\mu_\tau$ provides the initial batch size estimate (\Cref{{alg:batch-size}}), while $T_{\text{total}}$ determines the per-partition token budget for the repartitioning operator (\Cref{def:token-repartition}).

\textbf{Construction}. Histograms are built once per table and text attribute, offline, by one tokenization pass over the attribute, and stored in a catalog that the optimizer loads at planning time.
When the input attribute set $I$ of an operator has several attributes, $\tau_I(r)$ is the token count of the whole rendered row, and the quantile of a sum is not the sum of the per-attribute quantiles.
NousDB therefore rebuilds one histogram for $I$ by summing the per-attribute bucket profiles per bin, i.e., under a comonotone coupling of the attributes, for which quantiles are additive.

\subsection{Batch Size Optimization}
\label{subsec:batch-optimization}

\begin{algorithm}[t]
    \footnotesize                                  
    \setlength{\abovecaptionskip}{2pt}      
    \setlength{\belowcaptionskip}{0pt}
    \caption{\small Iterative Safe Batch Size Computation}
    \label{alg:batch-size}
    \begin{algorithmic}[1]
    \setstretch{0.95}                      
    \Require $H_\tau$; $C_{\max}$; $\tau_{\text{sys}}+\tau_{\text{inst}}$;
             $\hat{\tau}_{\text{out}}$; confidence $1-\alpha$
    \Ensure {Safe batch size $k_{H_\tau}$}
    \State $C_{\text{avail}} \gets C_{\max} - \tau_{\text{sys}} - \tau_{\text{inst}}$
    \State $\mu_\tau \gets \textsc{Mean}(H_\tau)$
    \State $k \gets \lfloor C_{\text{avail}} / (\mu_\tau + \hat{\tau}_{\text{out}}) \rfloor$ \Comment{mean-based init}
    \For{$i = 1$ \textbf{to} $\textsc{MaxIter}$}
        \State $p \gets 1 - \alpha/k$
        \State $q \gets \textsc{GetQuantile}(H_\tau, p)$
        \State $k_{\text{new}} \gets \lfloor C_{\text{avail}} / (q + \hat{\tau}_{\text{out}}) \rfloor$
        \If{$k_{\text{new}} = k$} \State \Return $k$ \EndIf
        \State $k \gets k_{\text{new}}$
    \EndFor
    \State \Return $k$
    \end{algorithmic}
\end{algorithm}


The operator definitions in \Cref{sem_ops} assume singleton execution, under which each tuple's output depends only on that tuple (\emph{input locality}). Batching reduces the $|R|$ singleton invocations to $\lceil |R|/k \rceil$ invocations of up to $k$ tuples, amortizing per-call overhead to reduce token usage and improve throughput~\cite{lin2024batchprompt, Cheng2023BatchPE, 10.1145/3696410.3714950, 10.14778/3734839.3734853}. However, a batch must both fit within the context window and preserve singleton outputs. Batching is thus a physical optimization valid only when it preserves the logical operator semantics.




To define safe batching, we distinguish the two outputs of a tuple under a fixed operator, model mmm, and prompt template.

\begin{definition}[Batched Invocation and Batch-Composition Invariance]\label{def:batched-invocation}\label{def:invariance}
    Let $B = \langle r_1,\dots,r_k \rangle$ be a batch of $k$ tuples.
    The prompt constructor renders one prompt $pc(B)$ that holds all $k$ tuples, and one invocation $m(pc(B))$ returns one output per tuple.
    We write $y_B(r)$ for the output returned for $r \in B$, and
    \begin{equation}
            y^{*}(r) \;=\; m\bigl(pc(r)\bigr)
        \label{eq:reference-output}
    \end{equation}
    for its \emph{per-row output}, the value that Definitions~\ref{extend_op}--\ref{join_op} assign to $r$.
    The operator is \emph{batch-composition invariant} at size $k$ if $y_B(r) = y^{*}(r)$ for every batch $B$ of $k$ tuples and every $r \in B$.
\end{definition}


The batch size $k$ is therefore the central physical decision for a semantic operator, and its choice is governed by two constraints: 

\noindent\textit{(i) Safe context-window utilization.}
The $k$ tuples and their outputs must fit in the context window $C_{\max}(m)$, since a batch that does not fit  is not rejected or truncated, but results to query failure.
Because a batch fit depends only on the token lengths of its tuples, the optimizer can   determine the maximum feasible batch size at planning time using token histograms, \textit{without invoking the model}.

\noindent\textit{(ii) Safe batching against per-row invocation.}
A batch can affect per row results: the output a tuple receives inside a batch may differ from the one it receives when invoked alone.
This failure is \emph{silent}, and it depends on the model, the prompt and the data, so it can only be measured\textit{ by invoking the model}.

The two problems are orthogonal. The first concerns the relationship between tuple token lengths and the window capacity, while the second concerns the model behavior. Neither implies the other: batches that fill a small fraction of the window may already lose quality.
\Cref{fig:distributions_and_sweep} illustrates this distinction for two cases: IMDB and SemBench reviews. The top row shows the token-length distributions of the review columns, which vary substantially across datasets. For example, the 90th-percentile review length in SemBench is approximately equal to the mean review length in IMDB.
In the bottom row, we execute four queries from IMDB under different batch size $k$ and measure their quality:
the candidate batch sizes are $k \in \{1,2,4,8,16,32,64\}$, starting from singleton execution and doubling at each step, plus two larger settings that utilize the $50\%$ and $75\%$ of the model's context window.
The results are mixed: in some cases batching provides a boost in the total accuracy, like Q1, Q4 , while it can be catastrophic,  like in Q2 and Q7.

\begin{definition}[Maximum Feasible Batch Size]
\label{def:max-batch}
The \emph{maximum feasible batch size} $k_{H_\tau}$, given a token histogram $H_\tau$, is the largest $k$ such that the expected batch, based on its estimated token demand, fits within the context window:
$\tau_{\text{sys}} + \tau_{\text{inst}} + k \cdot (\mu_\tau + \tau_{\text{delim}}) + k \cdot \hat{\tau}_{\text{out}} \leq C_{\max}(m)$. 
Solving for $k$:
\begin{equation} \label{eq:kmax}
    k_{H_\tau} = \left\lfloor \frac{C_{\max}(m) - \tau_{\text{sys}} - \tau_{\text{inst}}}{\mu_\tau + \tau_{\text{delim}} + \hat{\tau}_{\text{out}}} \right\rfloor
\end{equation}
\end{definition}


\noindent Recall that $\mu_{\tau}$ is the estimated mean number of input tokens per tuple from $H_\tau$, $\tau_{\mathrm{delim}}$ is the delimeter tokens, and $\hat{\tau}_{\mathrm{out}}$ is the estimated number of output tokens per tuple.
\Cref{alg:batch-size} computes $k_{H_\tau}$ in three steps, without invoking the model.
It first subtracts the fixed prompt from the window, which leaves the budget $C_{\text{avail}}$ for the tuples and their outputs, and starts from the mean-based size of Equation~\eqref{eq:kmax} (lines 1--3).
This start is optimistic: it sizes the batch for the \emph{mean} tuple, so any batch of longer-than-average tuples overflows.
Each iteration therefore sizes the batch for a long tuple instead, the $(1-\alpha/k)$-quantile $q$ of $H_\tau$ (lines 5--7): a batch of $k$ tuples of at most $q$ tokens each fits, and by the union bound one of its $k$ tuples exceeds $q$ with probability at most $k \cdot \alpha/k = \alpha$, whether or not the tuple lengths are independent.
Finally, $q$ depends on $k$: larger batches are more likely to contain long tuples and therefore require higher quantiles, which in turn reduces the feasible batch size.
The algorithm therefore repeats until $k$ stops changing (lines 8--11) and returns this fixed point as $k_{H_\tau}$, a batch size that overflows the window with probability at most $\alpha$.



\begin{figure}[t]
  \centering
  \includegraphics[width=\columnwidth]{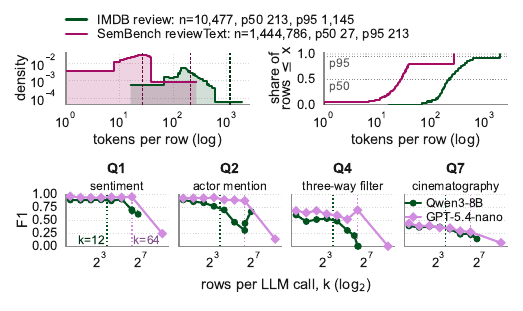}
  \vspace{-25pt}
  \caption{\small Reviews tokens distribution (up) and impact of batching in quality (bottom)}
  \vspace{-15pt}
  \label{fig:distributions_and_sweep}
\end{figure}

\subsubsection{Fidelity-Aware Batch Size} \label{batch_fidelity}
\Cref{alg:batch-size} bounds batch capacity, but does not ensure that the model preserves its singleton behavior in the batch. 
The bottom row of \Cref{fig:distributions_and_sweep} shows that sensitivity to\textit{ batching varies across queries and models}. With Qwen3-8B, the F1 score of Q2 falls from 0.89 at $k=1$ to 0.77 at $k=8$ and 0.30 at $k=64$. In contrast, for Q1, the F1 score stays between 0.87 and 0.90 up to $k=32$, dropping to 0.70 only at $k=64$. With GPT-5.4-nano, the F1 scores of both queries remain within 0.05 of their singleton values up to $k=64$, but collapse at $75\%$ of GPT's context window reaching 0.24 for Q1 and 0.14 for Q2. 
The top row of \Cref{fig:distributions_and_sweep} reveals additionally the role of \textit{tuple length}. an IMDB review spans 213 tokens at the median and 1,145 at the 95th percentile, against 27 and 213 for a SemBench review, so the same $k$ packs about eight times more text into one invocation on IMDB, and batches of equal $k$ differ in length even within one column.

Estimating the largest batch size such that output quality is close to that of singleton execution, at query time has three challenges.
\textit{(a)} Because no ground truth is available, we measure agreement against the engine’s singleton outputs, $y^{*}(r)$.
This follows prior work, which states statistical guarantees for LLM-powered operators relative to a gold algorithm~\cite{lotus,10.1145/3769776} and evaluates batched plans against unbatched execution~\cite{qi_vldb26,Liu2024CliqueParcelAA}.
 \textit{(b)} The $k$ outputs produced by a batch share the same context and may therefore be dependent, hence they cannot be treated as independent samples.
 \textit{(c)} The solution must be substantially cheaper than the query it protects, and its cost must not grow with $|R|$.
We therefore ask: given a semantic operator, an error tolerance $\eta\in(0,1)$, and a failure probability $\delta\in(0,1)$, what is the largest batch size $k$ such that, with probability at least $1-\delta$, the batched outputs differ from the corresponding per-row outputs on at most an $\eta$ fraction of tuples?


\begin{definition}[Batch Fidelity]
\label{def:fidelity}
    Given a batch $B$ that ranges over the batches of $k$ tuples that the engine forms,  the \emph{batch fidelity} of an operator at size $k$ is the probability that a tuple receives its per-row output inside a batch of $k$ tuples:
    \begin{equation}
        \varphi(k) \;=\;
        \Pr_{\,B,\; r \sim \mathrm{Unif}(B)}
        \bigl[\, y_B(r) = y^{*}(r) \,\bigr],
        \qquad \varphi(1) = 1,
        \label{eq:fidelity}
    \end{equation}
\end{definition}

To measure fidelity, a natural approach is to sample tuples and compare their batched outputs with their singleton outputs. However, outcomes for tuples within the same batch are dependent because the tuples share a context and may influence one another. We therefore evaluate one probe tuple per independently constructed batch. For a fixed batch size $k$, each probe defines a Bernoulli trial that succeeds when its batched output agrees with its singleton output. Assuming that probe batches are sampled independently, the agreement count follows \[ X_k \sim \mathrm{Binomial}\!\left(n,\varphi_o(k)\right), \] where $\varphi_o(k)$ is the fidelity of operator $o$ at batch size $k$, $n$ is the probe budget, and $X_k \in \{0,\ldots,n\}$ is the number of successful probes.

More specifically, NousDB estimates fidelity through a calibration run over a sample of the operator input. In addition to $y^{*}$, $y_B$ (\Cref{def:invariance}), $\eta$, and $\delta$, calibration uses: \squishlist 
\item the set of \emph{candidate batch sizes} \[ \mathcal{K} = \{2^q : q \geq 1,\ 2^q < k_{H_\tau}\} \cup \{k_{H_\tau}\}, \qquad L = |\mathcal{K}|, \] which doubles from $2$ up to the histogram-derived maximum feasible batch size $k_{H_\tau}$;

\item the \emph{probe tuples} $r_1,\ldots,r_n$, sampled uniformly from the operator input, where $n$ is the \emph{probe budget} fixed before calibration; 

\item for each probe $r_i$ and candidate size $k \in \mathcal{K}$, an independently constructed \emph{probe batch} $B_{i,k}$, which places $r_i$ in a uniformly random position and fills the remaining $k-1$ positions with tuples sampled from the operator input; and 

\item the \emph{agreement count} $X_k = \sum_{i=1}^{n} \mathbb{I}\!\left[ y_{B_{i,k}}(r_i)=y^{*}(r_i) \right].$
\squishend

\noindent\textbf{Certifying a batch size.}
The fraction $X_k/n$ estimates $\varphi(k)$ without a guarantee, so certification uses a one-sided \emph{lower confidence bound}: overestimating fidelity certifies a size that silently corrupts outputs, whereas underestimating it merely forfeits speed.
The exact Clopper--Pearson bound \cite{10.1093/biomet/26.4.404} is the smallest fidelity $\delta'$ under which $X_k$ or more agreements remain $\delta'$-plausible,
\begin{equation}
    \underline{\varphi}_{\delta'}(X_k, n)
    = \mathrm{Beta}^{-1}\bigl(\delta';\; X_k,\; n - X_k + 1\bigr),
    \label{eq:cp-bound}
\end{equation}
and satisfies $\Pr[\underline{\varphi}_{\delta'} \le \varphi(k)] \ge 1-\delta'$ for every true fidelity.
A candidate $k$ is certified if $\underline{\varphi}_{\delta'}(X_k, n) \ge 1-\eta$, and spending $\delta' = \delta/L$ on each of the $L$ candidates bounds the probability that any certificate is wrong by $\delta$ (union bound).

\begin{algorithm}[t]
    \footnotesize
    \setlength{\abovecaptionskip}{2pt}
    \setlength{\belowcaptionskip}{0pt}
    \caption{\small Batch-Fidelity Certification}
    \label{alg:fidelity}
    \begin{algorithmic}[1]
    \setstretch{0.95}
    \Require input tuples; $k_{H_\tau}$; tolerance $\eta$; confidence $1-\delta$
    \Ensure Fidelity-certified batch size $k_{safe}$
    \State $\mathcal{K} \gets \{2^j : j \ge 1,\ 2^j < k_{H_\tau}\} \cup \{k_{H_\tau}\}$; $L \gets |\mathcal{K}|$; $\delta' \gets \delta/L$
    \State $n \gets \bigl\lceil \ln(1/\delta') \,/\, \ln\bigl(1/(1-\eta)\bigr) \bigr\rceil$ \Comment{Eq.~\eqref{eq:probe-budget}}
    \State draw probes $r_1,\dots,r_n$ and fillers from the input
    \State $y^{*}_j \gets m(pc(r_j))$ for $j = 1,\dots,n$ \Comment{$n$ singleton calls}
    \For{$k \in \mathcal{K}$ in descending order}
        \State $B_j \gets$ $r_j$ in a random slot of $k$, rest fillers, $j = 1,\dots,n$
        \State $X_k \gets |\{\, j : y_{B_j}(r_j) = y^{*}_j \,\}|$ \Comment{$n$ batched calls}
        \If{$\mathrm{Beta}^{-1}(\delta';\, X_k,\, n-X_k+1) \ge 1-\eta$} \State \Return $k$ \EndIf
    \EndFor
    \State \Return $1$
    \end{algorithmic}
\end{algorithm}

We call $k_{safe}$, the batch size that preserves the fidelity of the batch and \Cref{alg:fidelity} computes it  once per semantic operator inside a query.
It obtains the $n$ singleton outputs once (line 4) and tries the candidates from the largest down (lines 5--8), starting from the batch size $k_{H_\tau}$, provided by \Cref{alg:batch-size}.
A candidate is abandoned as soon as its disagreements make a pass impossible, which saves model calls.
Finally, we must determine the probe budget $n$, i.e., the number of tuples needed to test to measure engine's fidelity.
If all probes agree ($X_k = n$), Equation~\eqref{eq:cp-bound} attains $\underline{\varphi}_{\delta'} = (\delta')^{1/n}$, so a perfect count passes once
\begin{equation}
n \;\ge\; \frac{\ln(L/\delta)}{\ln\bigl(1/(1-\eta)\bigr)}
\;\approx\; \frac{\ln(L/\delta)}{\eta};
\label{eq:probe-budget}
\end{equation}
at $(\eta,\delta) = (0.05, 0.05)$ and $L = 5$ this gives $n = 90$.
Calibration then costs at most $n$ singleton and $nL$ batched calls; the budget depends only on $(\eta, \delta, L)$ and not on $|R|$, so, like histogram construction, its share of the query cost shrinks as the data grows.

\noindent\textbf{Final batch size.}
Since a candidate batch size is at most $k_{H_\tau}$, the executed size $k^* = \min(k_{H_\tau}, k_{safe})$ is $k_{safe}$ whenever certification runs ($k_{H_\tau}$ otherwise).
The plan thus carries one guarantee per problem of \Cref{subsec:batch-optimization}: each batch overflows the window with probability at most $\alpha$, and fidelity is at least $1-\eta$ with confidence $1-\delta$.

\textbf{Example.}
Consider the sentiment selection of \Cref{fig:distributions_and_sweep} over IMDB reviews, with $pt$ rendering \emph{``Does \{review\} express positive sentiment? Answer yes or no.''}.
The default values for the failure probability $\delta$ and fraction of tuples $\eta$ are $\eta = \delta = 0.05$, meaning at most a $5\%$ tuple-level deviation from reference semantics, wrong with probability at most $5\%$. 
The first step of the optimizer, predicts the histogram guided batch size from \Cref{alg:batch-size}, which yields $k_{H_\tau} = 32$, hence the set if candidate batch sizes are $\mathcal{K} = \{2, 4, 8, 16, 32\}$, $L = 5$, $\delta' = \delta / L = 0.01$, and the probe budget $n = 90$ by Equation~\eqref{eq:probe-budget}. 
The descending scan first tests $k = 32$: calibration returns $X_{32} = 81$, an empirical fidelity of $0.90$, and Equation~\eqref{eq:cp-bound} gives $\underline{\varphi}_{0.01} = \mathrm{Beta}^{-1}(0.01; 81, 10) = 0.803 < 0.95$, so the candidate fails. 
Next, at $k = 16$ every probe agrees, $X_{16} = 90$, the bound attains its closed form $\underline{\varphi}_{0.01} = 0.01^{1/90} = 0.950 \ge 0.95$, and the scan stops. 
Thus $k_{safe} = 16$ and $k^{*} = \min(32, 16) = 16$: 
with confidence at least $0.95$, the fidelity at $k=16$ is at least $0.95$. Calibration requires $90$ singleton calls and $180$ batched calls, corresponding to the two candidates tested. This cost is independent of $|R|$.


\subsection{Token-Based Repartitioning} \label{sec:token-repartitioning}


Modern query engines distribute rows uniformly across parallel workers. In semantic query engines, however, cost is driven by token volume, so row-balanced partitions may be token-skewed. We quantify this risk using the coefficient of variation of per-tuple token counts, $\mathrm{CV}_\tau=\sigma_\tau/\bar{\tau}$, where $\bar{\tau}$ and $\sigma_\tau$ are their mean and standard deviation. Larger values indicate greater variation in token load despite equal row counts.

\begin{definition}[Token repartition operator]\label{def:token-repartition}
    Given a relation $R$, the input attributes $I$ of the downstream semantic operator, $P$ output partitions and a strategy $\mathcal{S}$, the operator $\rho^{\mathcal{S}}_\tau(R,I,P)$ produces a partition $\{R_0,\dots,R_{P-1}\}$ of $R$ ($\bigcup_jR_j=R$, $R_i\cap R_j=\emptyset$) that assigns each tuple by its token count under $\mathcal{S}$, aiming to minimize the \emph{imbalance ratio} $\lambda=\max_jT(R_j)/\min_jT(R_j)$, where $T(R_j)=\sum_{r\in R_j}\tau_I(r)$.
\end{definition}

\textbf{Strategies} (\Cref{fig:partitioning-strategies}).
\emph{Greedy Bin-Packing} (the default) assigns each tuple to the partition with the fewest accumulated tokens, analogous to the Longest Processing Time (LPT) scheduling heuristic for identical parallel machines \cite{doi:10.1137/0117039}.
\emph{Zigzag} sorts the tuples of each wave by decreasing token count and deals them to the partitions in a boustrophedon order ($P_0,\dots,P_{P-1},P_{P-1},\dots,P_0,\dots$), for downstream operators that want equal row counts.

\begin{theorem}[Balance of Greedy Bin-Packing]\label{thm:greedy-bound}
With $T_{\mathit{avg}}=\frac1P\sum_jT(R_j)$ and $\taumax$ the largest token count, greedy bn-packing guarantees $\max_jT(R_j)-\min_jT(R_j)\le\taumax$ at every point of the stream and $\max_jT(R_j)\le T_{\mathit{avg}}+\taumax$; hence $\lambda\le1+\taumax/(T_{\mathit{avg}}-\taumax)$ whenever $T_{\mathit{avg}}>\taumax$.
Zigzag guarantees that the row counts of the partitions differ by at most one within each wave.
\end{theorem}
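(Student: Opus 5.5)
The greedy claims follow from an invariant maintained by induction over the stream, and the remaining bounds follow from that invariant by arithmetic. The invariant is: after any prefix of tuples has been assigned, $\max_j T(R_j)-\min_j T(R_j)\le\taumax$. It holds initially, since all loads are $0$. Suppose it holds before a tuple $r$ with $\tau_I(r)\le\taumax$ arrives, and greedy adds $r$ to a partition $j^\*$ of minimum load $m$. The new load of $j^\*$ is $m+\tau_I(r)\le m+\taumax$. The other loads are unchanged and lie in $[m,M]$ with $M\le m+\taumax$. The new minimum is at least $m$, because no load decreased. The new maximum is $\max(M,\,m+\tau_I(r))\le m+\taumax$. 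Hence the new gap is at most $\taumax$. The one subtle point is that the minimum may move to a different partition, but it never drops below $m$; this is what closes the induction.

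For the second bound I would avoid averaging the invariant and argue from the last tuple added to the heaviest partition, in the usual LPT style. Let $j$ attain the final maximum, and let $r$ be the last tuple placed in $R_j$, with $L$ the load of $R_j$ just before $r$. Greedy chose $R_j$, so at that moment $L$ was the minimum load. Loads only grow, so $L\le T(R_i)$ for every $i$ at the end. Therefore $L\le T_{\mathit{avg}}$ at the end, and $\max_j T(R_j)=L+\tau_I(r)\le T_{\mathit{avg}}+\taumax$. If $R_j$ is empty the bound is trivial.

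For the ratio, the invariant gives $\min_j T(R_j)\ge \max_j T(R_j)-\taumax$. This lower bound alone does not yield the stated form, so I also use the averaging bound $\min_j T(R_j)\ge T_{\mathit{avg}}-\taumax$, which holds because the minimum is within $\taumax$ of the maximum and the maximum is at least the average. Then
\[
\lambda=\frac{\max_j T(R_j)}{\min_j T(R_j)}\le\frac{T_{\mathit{avg}}+\taumax}{T_{\mathit{avg}}-\taumax}=1+\frac{2\taumax}{T_{\mathit{avg}}-\taumax}.
\]
This is a factor of $2$ weaker than the stated $1+\taumax/(T_{\mathit{avg}}-\taumax)$. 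I expect sharpening it to be the main obstacle. The tighter route is to write $\lambda\le \max/(\max-\taumax)=1+\taumax/(\max-\taumax)$ and note that this expression decreases in $\max$. Since $\max\ge T_{\mathit{avg}}$, substituting $T_{\mathit{avg}}$ gives exactly $1+\taumax/(T_{\mathit{avg}}-\taumax)$ when $T_{\mathit{avg}}>\taumax$. So the stated bound follows from the gap invariant alone, and the LPT argument is needed only for the additive bound.

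For zigzag, I would index the positions of the sorted wave as $0,1,2,\dots$ and observe that the boustrophedon order partitions them into consecutive rounds of $P$. Each round visits every partition exactly once, whether ascending or descending. A wave of $w$ tuples thus gives each partition $\lfloor w/P\rfloor$ tuples from its full rounds. The final partial round adds one more tuple to at most $w \bmod P$ distinct partitions. Hence the per-wave row counts differ by at most one.
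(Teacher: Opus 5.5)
Your proposal is correct and follows the paper's proof step for step: the same induction on the spread (the minimum never drops below $L_{\min}$ and the new maximum is at most $L_{\min}+\taumax$), the same observation that the heaviest partition received its last tuple while it was least loaded, and the same round-by-round count for zigzag. Your sharpened ratio step is what the paper's chain $\min_jT(R_j)\ge\max_jT(R_j)-\taumax\ge T_{\mathit{avg}}-\taumax$ gives when written as $\lambda-1=(\max_jT(R_j)-\min_jT(R_j))/\min_jT(R_j)\le\taumax/(T_{\mathit{avg}}-\taumax)$, so you can drop the factor-$2$ detour through $(T_{\mathit{avg}}+\taumax)/(T_{\mathit{avg}}-\taumax)$.
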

\begin{proof}
A tuple of size $\tau\le\taumax$ joins a partition of minimum load $L_{\min}$; the new maximum is either the old one or $L_{\min}+\tau$, and the new minimum is at least $L_{\min}$, so the spread stays at most $\taumax$ by induction.
The partition that attains the maximum received its last tuple while it was the least loaded, i.e.\ at most the average, which gives the second bound and the ratio follows from $\min_jT(R_j)\ge\max_jT(R_j)-\taumax\ge T_{\mathit{avg}}-\taumax$.
Zigzag deals the $n$ tuples of a wave cyclically, so each partition receives $\lfloor n/P\rfloor$ or $\lceil n/P\rceil$ of them.
\end{proof}

\begin{figure}[t]
\centering
\resizebox{\columnwidth}{130pt}{
\begin{tikzpicture}[
    cell/.style={draw, minimum height=0.33cm, minimum width=0.55cm, font=\tiny, inner sep=1pt},
    plabel/.style={font=\scriptsize\bfseries},
    arrow/.style={->, semithick},
]

\begin{scope}
    \node[font=\scriptsize\bfseries, anchor=west] at (-0.7, 1.15) {(a) Greedy Bin-Packing};

    \foreach \i/\tok in {0/100, 1/300, 2/50, 3/200, 4/150, 5/40} {
      \node[cell, fill=nousblue!15] (r\i) at (\i*1.1 + 0.7, 0.6) {$\tau{=}\tok$};
      \node[font=\tiny, above=0pt] at (r\i.north) {$r_{\i}$};
    }

    \node[plabel] at (-0.3, -0.15) {$P_0$};
    \node[plabel] at (-0.3, -0.5) {$P_1$};
    \node[plabel] at (-0.3, -0.9) {$P_2$};

    \draw[arrow, nousblue!70] (r0.south) -- ++(0,-0.35);
    \node[cell, fill=nousblue!20] at (0.7, -0.1) {100};

    \draw[arrow, nousblue!70] (r1.south) -- ++(0,-0.65);
    \node[cell, fill=nousblue!20] at (1.8, -0.45) {300};

    \draw[arrow, nousblue!70] (r2.south) -- ++(0,-0.8);
    \node[cell, fill=nousblue!20] at (2.9, -0.6) {50};

    \draw[arrow, nousgreen!70] (r3.south) -- ++(0,-0.8);
    \node[cell, fill=nousgreen!20] at (4.0, -0.6) {200};

    \draw[arrow, nousgreen!70] (r4.south) -- ++(0,-0.65);
    \node[cell, fill=nousgreen!20] at (5.1, -0.45) {150};

    \draw[arrow, nousorange!80] (r5.south) -- ++(0,-0.35);
    \node[cell, fill=nousorange!20] at (6.2, -0.1) {40};

    \node[font=\tiny, right] at (6.9, -0.15) {$T_0{=}290$};
    \node[font=\tiny, right] at (6.9, -0.5)  {$T_1{=}300$};
    \node[font=\tiny, right] at (6.9, -0.9) {$T_2{=}250$};
\end{scope}

\begin{scope}[yshift=-2.5cm]
    \node[font=\scriptsize\bfseries, anchor=west] at (-0.7, 1.15) {(b) Zigzag Assignment};

    \foreach \i/\tok in {0/300, 1/200, 2/150, 3/100, 4/50, 5/40} {
      \node[cell, fill=nousblue!15] (z\i) at (\i*1.1 + 0.7, 0.6) {$\tau{=}\tok$};
      \node[font=\tiny, above=0pt] at (z\i.north) {$r_{\i}$};
    }

    \node[plabel] at (-0.3, -0.15) {$P_0$};
    \node[plabel] at (-0.3, -0.6) {$P_1$};
    \node[plabel] at (-0.3, -1.05) {$P_2$};

    \draw[arrow, nousblue!70] (z0.south) -- ++(0,-0.35);
    \node[cell, fill=nousgreen!25] at (0.62, -0.1) {300};

    \draw[arrow, nousblue!70] (z1.south) -- ++(0,-0.65);
    \node[cell, fill=nousgreen!25] at (1.8, -0.45) {200};

    \draw[arrow, nousblue!70] (z2.south) -- ++(0,-0.8);
    \node[cell, fill=nousgreen!25] at (2.9, -0.6) {150};

    \draw[arrow, nousorange!80] (z3.south) -- ++(0,-0.8);
    \node[cell, fill=nousorange!25] at (4.0, -0.6) {100};

    \draw[arrow, nousorange!80] (z4.south) -- ++(0,-0.65);
    \node[cell, fill=nousorange!25] at (5.1, -0.45) {50};

    \draw[arrow, nousorange!80] (z5.south) -- ++(0,-0.35);
    \node[cell, fill=nousorange!25] at (6.28, -0.1) {40};

    \node[font=\tiny, right] at (6.9, -0.15) {$T_0{=}340$};
    \node[font=\tiny, right] at (6.9, -0.6)  {$T_1{=}250$};
    \node[font=\tiny, right] at (6.9, -1.05) {$T_2{=}250$};
\end{scope}

\end{tikzpicture}
}
\caption{\small Token-aware partitioning for $P{=}3$ and 6 rows.
\textbf{(a)}~Greedy bin-packing assigns each row to the least-loaded partition; $\lambda{=}1.20$, variable row counts (3,\,1,\,2).
\textbf{(b)}~Zigzag sorts by descending $\tau$, sweeps forward then backward; equal rows ($\lceil 6/3\rceil{=}2$), $\lambda{=}1.36$.}
\label{fig:partitioning-strategies}
\vspace{-10pt}
\end{figure}


NousDB inserts a repartition node when the token distribution is sufficiently skewed to justify the overhead.
Given a token histogram $H_\tau$, it computes $\text{CV}_\tau = \sigma_\tau/\mu_\tau$ estimating the row-uniform imbalance ratio $\alpha_{\small \text{uniform}} \approx 1 + \text{CV}_\tau \cdot \sqrt{2P\ln P / N}$ via Central Limit Theorem (CLT) bounds on the maximum partition load \cite{10.5555/646975.711521}.
Repartitioning triggers when either (i)~$\alpha_{\text{uniform}} > 1{+}\delta$ ($\delta \in [0.10, 0.15]$), indicating the slowest partition exceeds ideal load by ${\geq}10\%$, or (ii)~the tail ratio $Q_{0.95}/Q_{0.50} > \theta_{\text{tail}}$ (default~$5{\times}$), independently capturing heavy-tail risk where CLT convergence is slow.
When triggered, the default strategy is \textsc{Greedy}.
\textsc{ZigZag} is chosen, when the downstream operator requires
uniform partition cardinality.

\subsection{Query Plan Rewrites} \label{subsec:plan_rewrites}

The algebraic properties in Section \ref{sem_ops} concern individual semantic operators and set operations. We now derive cross-operator equivalences that justify two classes of plan rewrites: \emph{pushdown rules} reduce the input to an expensive operator by moving a cheaper operator below it, while \emph{fusion rules} combine multiple operators into a single model invocation and avoid intermediate materialization.


\begin{property}[Selection Pushdown through Extend]%
    \label{prop:sel-push-extend}
    Given an extend $\varepsilon_{E,A,m,p}$,  and a semantic selection 
    $\sigma_{E',m,p'}$ whose predicate depends
    only on the base attributes $A$. Then:
    \[
      \sigma_{E',m,p'}\!\bigl(\varepsilon_{E,A,m,p}(R)\bigr)
      \;=\;
      \varepsilon_{E,A,m,p}\!\bigl(\sigma_{E',m,p',}(R)\bigr).
    \]
\end{property}
\begin{proof}
    By ~\ref{extend_op},
    $\varepsilon_{E,A,m,p}(R) = \{(r, E(r)) \mid r \in R\}$ where
    $E(r) = m(p, r[A])$. The left-hand side applies $\sigma$ to these extended
    tuples:
    \(
      \bigl\{(r, E(r)) \;\big|\; r \in R,\; \mathit{mp}_{m}(p', (r, E(r))[A])
      = \mathrm{true}\bigr\}.
    \)
    Since $\mathit{mp}$ depends only on $A$ and not on $E$, we have
    $(r, E(r))[A] = r[A]$, so the predicate reduces to
    $\mathit{mp}_{m}(p', r[A]) = \mathrm{true}$. This gives:
    \(
      \bigl\{(r, E(r)) \;\big|\; r \in R,\;
      \mathit{mp}_m(p', r[A]) = \mathrm{true}\bigr\}.
    \)
    On the right-hand side, $\sigma$ first filters $R$ to obtain
    $R' = \{r \in R \mid \mathit{mp}_{m'}(p', r[A]) = \mathrm{true}\}$, and then
    $\varepsilon$ extends $R'$ to produce
    $\{(r, E(r)) \mid r \in R'\} = \{(r, E(r)) \mid r \in R,\;
    \mathit{mp}_{m}(p', r[A]) = \mathrm{true}\}$.
    The two expressions are identical.
\end{proof}



\begin{property}[Semantic Selection Pushdown through Semantic Join]%
    \label{prop:sel-push-join}
    Let $\sigma_{E',m',p'}$ be a Semantic Selection whose predicate
    $\mathit{mp}$ depends only on attributes of $R$.
    Then:
    \[
      \sigma_{E',m',p'}\!\bigl(
        R \bowtie^{\mathit{sem}}_{A,B,m,p} S
      \bigr)
      \;=\;
      \sigma_{E',m',p'}(R)
        \bowtie^{\mathit{sem}}_{A,B,m,p} S.
    \]
\end{property}
\begin{proof}
    We prove the left-pushdown case and the right case is symmetric. 
    Let $P(r) \equiv \operatorname{mp}_m(r[A]) = \text{true}$ as the model-based selection predicate of the semantic selection. 
    By hypothesis, $P(r)$ depends only on the $\sch(R)$ of each output tuple $(r, s)$ of the semantic join, so the selection predicate will be applied on $P\bigl((r,s)\bigr) = P(r)$.
    The left-hand side is:
    \(
      \bigl\{(r, s) \;\big|\; r \in R,\, s \in S,\,
        \theta_m(r[A], s[B]) = \mathrm{true},\,
        P(r) = \mathrm{true}\bigr\}.
    \)
    Since $\theta_m$ and $P(r)$ are independent predicates and conjunction is commutative, this equals:
    \(
      \bigl\{(r, s) \;\big|\; r \in R,\, P(r) = \mathrm{true},\,
        s \in S,\, \theta_m(r[A], s[B]) = \mathrm{true}\bigr\}.
    \)
    The inner condition $\{r \in R \mid P(r)\}$ is precisely $\sigma_{E',m,p'}(R)$, giving $\sigma(R) \bowtie^{\mathit{sem}}_{A,B,m,p} S$.
    Symmetrically, we can prove the same for $S$.
\end{proof}


\begin{property}[Extend--Selection Fusion]\label{prop:ext-sel-fusion}
Let $\varepsilon_{E,A,m,p}(R)$ be an Extend and $\sigma^{\mathit{sem}}_{E,m,p'}$ a Semantic Selection,
both using the same oracle model~$m$, whose predicate depends only on the
extended attributes~$E$, i.e., $\mathit{mp} = \mathit{mp}(E)$.
Assume $E$ is not required by downstream operators (so the final schema is~$A$).
Define a composed prompt~$p_c$ such that for all $r \in R$:
\(
  \mathit{mp}_m(p_c,\, r[A])
  \;=\;
  \mathit{mp}_m\!\bigl(p',\, m(p,\, r[A])\bigr).
\).
Then:
\[
  \pi_A\!\Bigl(
    \sigma^{\mathit{sem}}_{E,\,m,\,p'}\!\bigl(
      \varepsilon_{E,A,m,p}(R)
    \bigr)
  \Bigr)
  \;=\;
  \sigma^{\mathit{sem}}_{A,\,m,\,p_c}(R).
\]
\end{property}
\begin{proof}
    By Definition~\ref{extend_op}, $\varepsilon_{E,A,m,p}(R) = \{(r, E(r)) \mid r \in R,\; E(r) = m(p, r[A])\}$.
    Applying the selection yields $ \mathit{mp}_{m}(p', E(r)) = \mathrm{true}$.
    Projecting to~$A$ gives $\{r \in R \mid \mathit{mp}_{m}(p', E(r)) = \mathrm{true}\}$.
    Since $E(r) = m(p, r[A])$, we have
    $\mathit{mp}_m(p', E(r)) = \mathit{mp}_m(p', m(p, r[A]))$.
    By the definition of~$p_c$, this equals
    $\mathit{mp}_m(p_c, r[A])$ for all~$r$,
    which yields precisely
    $\sigma^{\mathit{sem}}_{A,\,m,\,p_c}(R)$.
\end{proof}


\begin{property}[Extend--Join Fusion]\label{prop:ext-join-fusion}
Let $R$ and $S$ be two input relations. The following three fusion rules
eliminate Extend operators whose output attributes are used solely as join keys
and are not required in the final output. \\
\underline{(a) Left fusion.}\;
Let $E(r) = m(p_e, r[A_R])$ be produced by $\varepsilon_{E,A,m,p_e}(R)$
and used only to join with $S$ on attributes $B$ via prompt $p_j$.
If $\exists$ $p'_j$ s.t. $\forall r \in R$, $s \in S$:
\(
  \theta_{m}\!\bigl(r[A_R],\, s[B]\bigr)
  \;=\;
  \theta_{m}\!\bigl(m(p, r[A_R]),\, s[B]\bigr),
\)
then:
\[
  \pi_{A_R \cup A_S}\!\Bigl(
    \varepsilon_{E,A,m,p_e}(R)
    \bowtie^{\mathit{sem}}_{E,B,m,p_j} S
  \Bigr)
  \;=\;
  R \bowtie^{\mathit{sem}}_{m,B,m,p'_j} S.
\]

\noindent\underline{(b) Right fusion.}\;
Let $E'(s) = m(p'_e, s[A_S])$ be produced by
$\varepsilon_{E',A,m,p'_e}(S)$ and used only as right-side join attributes.
If $\exists$ $p''_j$ s.t. $\forall r \in R$, $s \in S$:
$
  \theta_{m}\!\bigl(r[A],\, s[A_S]\bigr)
  \;=\;
  \theta_{m}\!\bigl(r[A],\, m(p'_e, s[A_S])\bigr),
$
then:
\[
  \pi_{A_R \cup A_S}\!\Bigl(
    R \bowtie^{\mathit{sem}}_{A,E',m,p_j}
    \varepsilon_{E',m,p'_e}(S)
  \Bigr)
  \;=\;
  R \bowtie^{\mathit{sem}}_{A,m,p''_j} S.
\]

\noindent\underline{(c) Bilateral fusion.}\;
    If both input relations are extended ($E(r) = m(p_e, r[A_R])$ and $E'(s) = m(p'_e, s[A_S])$) and both sets of extended attributes are used only as join keys, and $\exists$ a composed  prompt $p^\star_j$ such that for all $r \in R$, $s \in S$:
    \[
      \theta_{m}\!\bigl(r[A_R],\, s[A_S]\bigr)
      \;=\;
      \theta_{m}\!\bigl(m(p_e, r[A_R]),\, m(p'_e, s[A_S])\bigr),
    \]
    then:
    \[
      \pi_{A_R \cup A_S}\!\Bigl(
        \varepsilon_{E,m,p_e}(R)
        \bowtie^{\mathit{sem}}_{E,E',m,p_j}
        \varepsilon_{E',m,p'_e}(S)
      \Bigr)
      \;=\;
      R \bowtie^{\mathit{sem}}_{m,p^\star_j} S.
    \]
    Case (c) follows from applying case (a) and then case (b) in sequence.
\end{property}
\begin{proof}
    We prove case (a); case (b) is symmetric, and case (c) follows by composing the two.
    By Definition~\ref{extend_op}, $\varepsilon_{E,A,m,p_e}(R) = \{(r, E(r)) \mid r \in R,\; E(r) = m(p_e, r[A_R])\}$. 
    By Definition~\ref{definition:join}, the left-hand side is:
    \(
      \bigl\{\bigl((r, E(r)),\, s\bigr) \;\big|\;
        r \in R,\, s \in S,\,
        \theta_{m}\!\bigl(E(r),\, s[B]\bigr) = \mathrm{true}\bigr\}.
    \)
    Projecting to $A_R \cup A_S$ removes the extended attributes $E$, yielding:
    \(
      \bigl\{(r, s) \;\big|\;
        r \in R,\, s \in S,\,
        \theta_{m}\!\bigl(m(p_e, r[A_R]),\, s[B]\bigr) = \mathrm{true}
      \bigr\}.
    \)
    By the hypothesis: $\theta_{m}(r[A_R], s[B]) = \theta_{m}(m(p_e, r[A_R]), s[B])$ for all $r, s$. 
    Substituting gives $R \bowtie^{\mathit{sem}}_{A,B,m,p'_j} S$.
\end{proof}


\noindent\Cref{fig:workflow}, shows how NousDB puts in action all the proposed optimizations.
NousDB applies query rewriting to fuse the two semantic operators and based on input token distribution, it estimates the batch size and whether to insert the token repartition operator.
\section{Experiments} 

\begin{figure}[t]
\centering
\resizebox{\columnwidth}{120pt}{%
\begin{tikzpicture}[
    >=Stealth,
    box/.style={rectangle, rounded corners=2pt, draw=#1, fill=white!1, font=\scriptsize, inner xsep=2.5pt, inner ysep=1.5pt, minimum height=0.38cm, 
    text centered, semithick},
    box/.default={black!40},
    sem/.style={box=violet!60, fill=violet!10},
    optbox/.style={box=orange!60, fill=orange!10, align=center, inner ysep=1pt, minimum width=1.72cm},
    fusedbox/.style={box=orange!70, fill=orange!18, thick},
    neutbox/.style={box=black!30, fill=gray!6},
    execbox/.style={box=red!40, fill=red!6},
    codebox/.style={box=blue!40, fill=blue!6},
    repartb/.style={box=orange!50, fill=orange!8},
    plabel/.style={font=\footnotesize, text=black!70, anchor=north west},
    note/.style={font=\tiny, text=black!50},
    arr/.style={->, >=Stealth, black!45},
    darr/.style={->, >=Stealth, black!30},
    bigarr/.style={->, >=Stealth, thick, black!45},
]
\def\xL{0}     \def\wL{2.85}
\def\xM{3.3}   \def\wM{2.8}
\def\xO{6.55}  \def\wO{1.9}
\def\pH{2.15}
\def\tA{0}      
\def\tB{-2.2}   
\def\cL{1.425} \def\cM{4.6}   \def\cO{7.5}

\begin{scope}[on background layer]
\fill[blue!3, rounded corners=4pt] (\xL,\tA) rectangle ++(\wL,-\pH);
\fill[teal!3, rounded corners=4pt] (\xM,\tA) rectangle ++(\wM,-\pH);
\filldraw[draw=orange!30, fill=orange!3, rounded corners=4pt] (\xO,\tA) rectangle (\xO+\wO,\tB-\pH);
\fill[blue!3, rounded corners=4pt] (\xM,\tB) rectangle ++(\wM,-\pH);
\fill[red!3, rounded corners=4pt] (\xL,\tB) rectangle ++(\wL,-\pH);
\end{scope}

\node[plabel] at (\xL+0.08,\tA-0.06) {1\;\;User program};
\node[codebox, anchor=north, align=left, font=\scriptsize, minimum width=2.5cm] at (\cL,\tA-0.7) (code) {%
    sql("... reviews")\\
    .extend("actor", ...)\\
    .semantic\_join(...)\\
    .select\_columns(...)};

\node[plabel] at (\xM+0.08,\tA-0.06) {2\;\;Logical plan};
\node[neutbox] at (\cM,\tA-0.64) (proj1) {$\pi_{\textit{review, title, name}}$};
\node[sem] at (\cM-0.25,\tA-1.23) (sj1) {$\bowtie^{\mathit{sem}}_{\textit{actor}=\textit{name}}$};
\node[sem] at (\cM-0.5,\tA-1.82) (ext1) {$\varepsilon_{\textit{actor}}(\textit{reviews})$};
\node[neutbox] at (\cM+0.92,\tA-1.82) (nr1) {\textit{name}};
\draw[arr] (sj1) -- (sj1|-proj1.south);
\draw[arr] (ext1) -- (sj1);
\draw[arr] (nr1) -- (sj1);
\node[note, anchor=west, text=red!60, font=\tiny\bfseries, align=left] at ([xshift=0.08cm]sj1.east) {2 LLM\\[-1.5pt]calls/row};

\node[plabel] at (\xO+0.08,\tA-0.06) {3\;\;Optimizer};
\node[note, anchor=north west, align=left] at (\xO+0.1,\tA-0.4) {};
\node[optbox, anchor=north] at (\cO,\tA-0.7) (r1) {%
    Semantic\\[-1.5pt] rewriting\\[-1.5pt]
    {\tiny Extend-join fusion}};
\node[optbox, anchor=north, below=0.45cm of r1] (r2) {%
    Histogram\\[-1.5pt]
    {\tiny $H_\tau$(\textit{review})}};
\node[optbox, anchor=north, below=0.45cm of r2] (r3) {%
    Batch size\\[-1.5pt]
    {\tiny $b_L{=}b_R{=}6$}};
\node[optbox, anchor=north, below=0.45cm of r3] (r4) {%
    Token\\[-1.5pt] repartition\\[-1.5pt]
    {\tiny Greedy bin-pack}};

\draw[darr] (r1) -- (r2);
\draw[darr] (r2) -- (r3);
\draw[darr] (r3) -- (r4);

\node[plabel] at (\xM+0.08,\tB-0.06) {4\;\;Physical plan};
\node[neutbox] at (\cM,\tB-0.64) (proj2) {ProjectionExec};
\node[fusedbox, align=center, inner ysep=1pt] at (\cM-0.25,\tB-1.23) (fj) {SemanticJoinExec\\[-1.5pt]{\tiny fused, $b_L{=}b_R{=}6$}};
\node[repartb, align=center, inner ysep=1pt] at (\cM-0.55,\tB-1.82) (rep2) {TokenRepart\\[-1.5pt]{\tiny bin-pack(reviews)}};
\node[neutbox, align=center, inner ysep=1pt] at (\cM+0.85,\tB-1.82) (nr2) {Parquet\\[-1.5pt]{\tiny (name)}};
\draw[arr] (fj) -- (fj|-proj2.south);
\draw[arr] (rep2) -- (rep2|-fj.south);
\draw[arr] (nr2) -- (fj);
\node[note, anchor=west, text=teal!70, font=\tiny\bfseries, align=left] at ([xshift=0.06cm]fj.east) {1 LLM\\[-1.5pt]call/pair};

\node[plabel] at (\xL+0.08,\tB-0.06) {5\;\;Execution};
\node[note, anchor=north east, align=right] at (\xL+\wL-0.06,\tB-0.07) {};
\node[execbox, minimum width=2.55cm] at (\cL,\tB-0.72) (bnl) {Block nested loop};
\node[execbox, fill=red!10, minimum width=2.55cm] at (\cL,\tB-1.3) (vllm) {vLLM{\tiny Qwen3-8B}};
\node[box=teal!60, fill=teal!8, minimum width=2.55cm] at (\cL,\tB-1.88) (result) {Result {\tiny (review, title, name)}};
\draw[arr] (bnl) -- (vllm);
\draw[arr] (vllm) -- (result);

\draw[bigarr] (\xL+\wL+0.04,\tA-1.0) -- (\xM-0.04,\tA-1.0);
\draw[bigarr] (\xM+\wM+0.04,\tA-1.0) -- (\xO-0.04,\tA-1.0);
\draw[bigarr] (\xO-0.04,\tB-1.0) -- (\xM+\wM+0.04,\tB-1.0);
\draw[bigarr] (\xM-0.04,\tB-1.0) -- (\xL+\wL+0.04,\tB-1.0);
\end{tikzpicture}%
}
\caption{\footnotesize
End-to-end system workflow for query Q11.
 The user program~(\textbf{1}) builds a logical plan~(\textbf{2}) with separate $\varepsilon$ and $\bowtie^{\mathit{sem}}$ operators requiring two LLM calls per row.
The optimizer~(\textbf{3}) applies the new rules.
}
\label{fig:workflow}
\vspace{-10pt}
\end{figure}

We evaluate NousDB through four research questions: 

\textbf{RQ1: How does NousDB perform end to end?} We compare NousDB with competing systems on IMDB, SemBench~\cite{sembench}, and semantic extensions of JOB~\cite{10.14778/2850583.2850594} and STACK~\cite{10.14778/3503585.3503586} (\Cref{sec:rq1}).

\textbf{RQ2: How do batching and fidelity constraints affect execution?} We measure the token savings from batching and the effects of fidelity constraints on batch size, runtime, and quality (\Cref{sec:rq2}). 

\textbf{RQ3: What is the contribution of each optimization?} We ablate histogram-guided batch sizing, fidelity-aware batching, token-aware repartitioning, and algebraic rewrites (\Cref{sec:rq3}). 

\textbf{RQ4: What are the benefits of the semantic algebra?} We evaluate its effects on query performance and output quality and discuss it across all sections.

 




\subsection{Experimental Setup}\label{sec:setup}

All experiments run on a server equipped with two NVIDIA A6000 GPUs (48\,GB VRAM each), Intel(R) Xeon(R) Gold 5318Y CPU @ 2.10GHz and $377$GB memory.
For model serving we used vLLM \cite{10.1145/3600006.3613165}.
For our experiments we use Qwen3-8B \cite{yang2025qwen3}, a GPT-5.4-nano \cite{singh2025openai} and a Qwen3-32B for the ablation study only.

\textbf{Systems.}
NousDB default setup is histogram-guided batched execution (\textit{Batched}), without model reasoning enabled, and when necessary, we specify when it computes the batch fidelity (\emph{+Certified}) and when it uses model reasoning (\emph{+Thinking}). 
We compare against two state-of-the-art semantic query processing engines:
LOTUS \cite{patel2024semanticoperators} and
Palimpzest \cite{liu2025palimpzest}, a declarative system w/o its  sample-based cost optimizer Abacus \cite{abacus}.
In our analysis we include a vanilla DataFusion \cite{10.1145/3626246.3653368} version, with naive semantic operator implementation, processing their input row-by-row.
All competitors use the same models (Qwen3-8B via vLLM and GPT-5.4-nano via cloud with $32K$ and $400K$ context each) with 10 parallel workers to ensure a fair comparison of system-level optimizations.

\textbf{Workloads.}
We use 4 different benchmarks in our evaluation: \emph{IMDB} extends the JOB schema~\cite{10.14778/2850583.2850594} with 10.5K reviews~\cite{maas-etal-2011-learning}.
It contains 11 queries in total, having six single operator queries (Q1--Q4, and the extend-to-aggregate Q5 and Q6) and the rest are pipelines of several semantic operators (Q7--Q11), four of which the rewrites target (Q7, Q8, Q10, Q11).
\emph{SemBench-Movie}~\cite{sembench} has ten queries: retrieval with a LIMIT (SB1, SB2), aggregates (SB3, SB4, SB8), self-joins of one movie's reviews (SB5 and SB6 with a LIMIT, SB7 exhaustive over 13,924 pairs) and ranking (SB9, SB10).
Finally, we extend JOB ~\cite{10.14778/2850583.2850594} and \emph{StackExchange} benchmarks (24M
rows under the STACK schema~\cite{10.14778/3503585.3503586}, with semantic operators, augmenting standard benchmarks used in the query optimization literature.

\subsection{Comparison against Workloads} \label{sec:rq1}

\begin{figure}[t]
  \centering
  \includegraphics[width=\columnwidth]{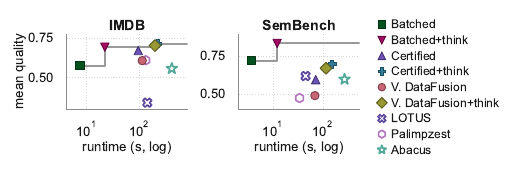}
    \vspace{-30pt}
  \caption{\small Mean quality against runtime with GPT-5.4-nano: the six NousDB variants and competitors on the Pareto frontier.}
  
  \label{fig:frontier}
  \vspace{-12pt}
\end{figure}

\textbf{Quality against runtime.}
With GPT-5.4-nano (\Cref{fig:frontier}) only NousDB variants form the Pareto frontier: \emph{Batched} is the fastest point (7.4s on IMDB, 3.6s on SemBench), and \emph{Batched+think}, achieving 0.69 at 22s and 0.84 at 12s, dominates every external system, whose best mean quality is Palimpzest's 0.61 at 132s on IMDB and LOTUS's 0.62 at 45s on SemBench.
Similarly with Qwen3-8B, the IMDB frontier is again NousDB's alone, from \emph{Batched} (0.58 at 22\,s) to \emph{Certified+think} (0.72), while on SemBench LOTUS joins it (0.76 at 12s, between \emph{Batched's} 0.70 at 6.2\,s and V.DataFusion+think's 0.77 at 638\,s), the competitor that no NousDB configuration dominates.
From our experiments, reasoning multiplies the decode cost of every call, since 90--97\% of the thinking arms' output tokens on Qwen are reasoning tokens, and packing $k$ rows per call amortizes exactly that cost, leading to a \emph{if you think, batch} rule of thumb.

\begin{table}[t]
    \centering
    \resizebox
    {\columnwidth}
    {!}
    {
    \begin{tabular}{@{}l@{\hspace{3pt}}rrrrr@{\hspace{6pt}}rrrrr@{}}
    \toprule
     & \multicolumn{5}{c}{IMDB (11 queries)} & \multicolumn{5}{c@{}}{SemBench (10 queries)} \\
    \cmidrule(lr){2-6}\cmidrule(l){7-11}
     & F1 & Qual. & Wall\,(s) & Tok.\,M & $\times$ & F1 & Qual. & Wall\,(s) & Tok.\,M & $\times$ \\
    \midrule
    \multicolumn{11}{@{}l}{\textit{Qwen3-8B (sf 1000)}} \\
    \textbf{Batched} & 0.54 & 0.58 & 22 & 3.0 & 1 & 0.62 & 0.70 & 6.2 & 1.1 & 1 \\
    Certified & 0.56 & 0.59 & 83 & 7.4 & 3.79 & 0.47 & 0.64 & 27 & 28.9 & 4.42 \\
    Row-by-row+think & 0.64 & 0.67 & 980 & 18.9 & 44.6 & 0.77 & \textbf{0.77} & 638 & 61.6 & 103.3 \\
    Batched+think & 0.67 & 0.71 & 151 & 4.2 & 6.90 & 0.78 & 0.75 & 90 & 5.1 & 14.6 \\
    Certified+think & \textbf{0.69} & \textbf{0.72} & 1204 & 10.3 & 54.8 & \textbf{0.79} & 0.74 & 924 & 53.8 & 149.6 \\
    \addlinespace[1pt]
    V.DataFusion & 0.48 & 0.52 & 24 & 9.5 & 1.08 & 0.47 & 0.62 & 16 & 29.0 & 2.59 \\
    LOTUS & 0.28 & 0.37 & 42 & 4.7 & 1.91 & 0.75 & 0.76 & 12 & 5.7 & 1.94 \\
    Palimpzest & 0.67 & 0.69 & 221 & 8.7 & 10.1 & 0.65 & 0.61 & 64 & 9.5 & 10.4 \\
    Abacus & 0.61 & 0.67 & 837 & 2.7 & 38.1 & 0.60 & 0.67 & 337 & 0.45 & 72.6 \\
    \midrule
    \multicolumn{11}{@{}l}{\textit{GPT-5.4-nano (sf 1000)}} \\
    Row-by-row$^{a}$ & 0.56 & 0.61 & 113 & 10.2 & 15.2 & 0.48 & 0.49 & 68 & 31.8 & 18.9 \\
    \textbf{Batched} & 0.50 & 0.58 & 7.4 & 2.6 & 1 & 0.63 & 0.72 & 3.6 & 1.8 & 1 \\
    Certified & 0.64 & 0.67 & 95 & 5.8 & 12.8 & 0.50 & 0.60 & 71 & 31.6 & 19.7 \\
    Row-by-row+think & 0.68 & 0.70 & 197 & 17.1 & 26.6 & 0.78 & 0.67 & 114 & 53.8 & 31.6 \\
    Batched+think & 0.66 & 0.69 & 22 & 3.0 & 2.97 & \textbf{0.85} & \textbf{0.84} & 12 & 2.9 & 3.31 \\
    Certified+think & \textbf{0.69} & \textbf{0.72} & 221 & 7.3 & 29.8 & 0.79 & 0.70 & 152 & 53.8 & 41.9 \\
    \addlinespace[1pt]
    V.DataFusion & 0.48 & 0.52 & 24 & 9.5 & 1.08 & 0.47 & 0.62 & 16 & 29.0 & 2.59 \\
    LOTUS & 0.22 & 0.34 & 141 & 4.3 & 19.3 & 0.66 & 0.62 & 45 & 3.1 & 12.3 \\
    Palimpzest & 0.57 & 0.61 & 131 & 8.1 & 17.8 & 0.62 & 0.47 & 34 & 8.0 & 9.32 \\
    Abacus & 0.46 & 0.56 & 411 & 14.0 & 59.3 & 0.66 & 0.60 & 269 & 1.4 & 86.3 \\
    \bottomrule
    \end{tabular}
    }
    \caption{\small NousDB variants and competitors: mean F1, mean normalized quality over all queries (Qual.), geometric-mean wall clock (s), total tokens (M), and $\times$ = paired geometric-mean speedup of NousDB}
    \vspace{-20pt}
    \label{tab:all_systems}
\end{table}

\noindent\textbf{Runtime.}
\emph{Batched} NousDB is the fastest system in every benchmark and model setting (\Cref{fig:latency-grid}).
With GPT-5.4-nano it is $15$--$19\times$ faster than V.~DataFusion, LOTUS and Palimpzest on IMDB and $9$--$19\times$ on SemBench, and $59$--$86\times$ faster than Abacus, whose plan search dominates its runtime.
On Qwen3-8B it is $1.9\times$ faster than LOTUS, $10\times$ than Palimpzest and $38$--$73\times$ than Abacus, but only $1.08\times$ faster than V.~DataFusion on IMDB, where both run their semantic operators on ten partitions and vLLM's continuous batching and prefix cache already make per-row requests cheap.
On SemBench, the movie filter leaves only 77--118 rows on five queries, which NousDB, V.~DataFusion and LOTUS finish within $1.5\times$ of one another while Palimpzest and Abacus pay 10--66\,s and 205--400\,s of fixed overhead.
On the LIMIT-bounded SB1, SB5 and SB6, NousDB answers from a bounded prefix of its input (Ref section 1 of the experimental), $2.5$--$38\times$ faster than V.~DataFusion and LOTUS.
Doubling the input (dots in \Cref{fig:latency-grid}) grows NousDB's runtime by $1.26$--$1.92\times$ and that of V.~DataFusion, LOTUS and Palimpzest by $1.7$--$2.2\times$ (geometric means per benchmark and model), so 14 of the 16 speedups grow, up to $85\times$ on IMDB and $82\times$ on SemBench, while the two that shrink are against Abacus, whose fixed-budget plan search takes about as long at either scale.

\textbf{Quality and Total Tokens.}
Focusing on the token consumption reported in \Cref{tab:all_systems}, batching and plan rewrites make NousDB having the lowest token consumption in total on 40 out of 42 queries.
In terms of quality, no system is dominant on every query, with Qwen3-8B Palimpzest and Abacus lead IMDB and LOTUS on SemBench, paying the overhead of $18-20\times$ of NousDB's runtime.
    
NousDB achieves Pareto frontier optimality primarily through its implementation of the proposed extended algebra, especially the \emph{pushdowns} and \emph{fusions}.
Pushing Q7's selection below its Extend operator, reduces the tuples reaching the model from $2000$ to $1606$ on Qwen3 and to $1166$ to GPT, achieving a $20\%$ and $42\%$ reduction on token consumption respectively, while the application of semantic filter before the semantic join on Q8, cuts the $44-47\%$ of the tokens to reach the second semantic opertor.
In all the cases above the F1 score varied only $0.01$, compared to full execution of the plan.
On Q10 the fusion of the Extend-selection halves the total rows sent to the model, while improving the F1 score from $0.24$ to $0.47$ and from $0.13$ to $0.57$ for Qwen and GPT respectively.
Finally the fused plan on Q11, managed to achieve the best F1 score across all systems.

\begin{figure}[t]
      \vspace{-10pt}
  \centering
  \includegraphics[width=\columnwidth]{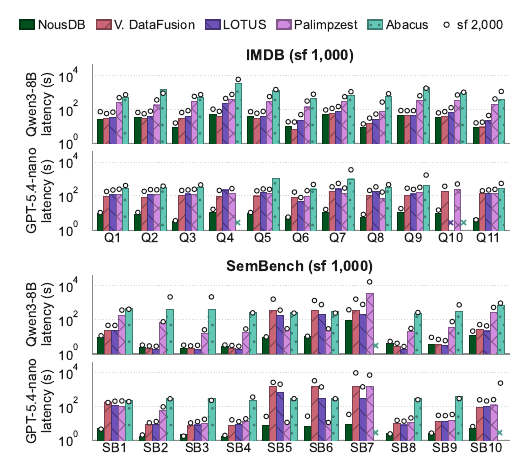}

  \caption{\small End-to-end latency per query (log scale) bars indicate at sf 1$,$000, dots at sf 2$,$000}
  \label{fig:latency-grid}
  \vspace{-15pt}
\end{figure}

\textbf{JOB and STACK workloads.}
We augment two well-known benchmarks for query optimization with semantic operators, keeping each source query's join graph, predicates and outputs and adding semantic operators on a text attribute its joins reach: JOB~\cite{10.14778/2850583.2850594} gets the \textsf{reviews} relation, joined to \textsf{title}, and STACK~\cite{10.14778/3503585.3503586} already exposes question, answer and comment bodies.
Each query gets one semantic configuration (a selection, a one- or multi-attribute Extend, a selection--Extend pipeline, or a semantic join), yielding 33 JOB queries and 10 STACK queries.
\Cref{fig:job-stack} contains the comparison between the systems on these two benchmarks.
In 19 of them the original join predicates drive the intermediate result to zero cardinality before the semantic operator fires, where on this \emph{relational-dominant} subset the three systems fall within $1.2\times$ of one another, showing that semantic augmentation adds negligible overhead when the operator is inactive.
The remaining 14 \emph{semantic-active} queries trigger between 3 and 1{,}265 LLM calls.
On this subset NousDB achieves a geometric-mean speedup of $3.3\times$ over LOTUS and $2.8\times$ over Palimpzest, winning 13 of 14 and 12 of 14 respectively.
The advantage peaks on complex-tier queries whose join graphs emit ${\sim}500$K intermediate rows: on J8, NousDB finishes in 717\,s versus 25{,}038\,s (LOTUS, $35\times$) and 22{,}204\,s (Palimpzest, $31\times$), and consumes 16.1\,M tokens versus 29.9\,M ($1.86\times$ fewer), as the shared instruction prefix is amortised across the batch.
The regression on J24, where NousDB trails LOTUS by $1.3\times$ due to a join topology that limits the greedy bin-packing operator.

In STACK benchmark, we evaluate NousDB at $\text{sf}\!=\!10{,}000$ while LOTUS and Palimpzest run at $\text{sf}\!=\!1{,}000$.
Despite $10\times$ more data, NousDB achieves geometric-mean latency $4.9\times$ lower than LOTUS and $5.3\times$ lower than Palimpzest, winning 8 of 10 and 9 of 10 pairwise comparisons.
The two near-parity queries (S1, S10) are relational-dominant, each triggering at most two LLM calls.
On the most token-intensive query (S11), NousDB processes 7{,}146 rows and 2.37\,M tokens in 84\,s, versus 222\,s (LOTUS) and 268\,s (Palimpzest).

\begin{figure}[t]
  \centering
  \includegraphics[width=\columnwidth]{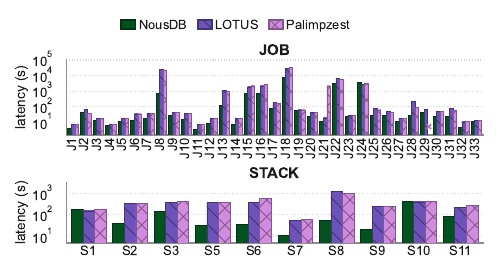}
  \caption{\small JOB (top) and STACK (bottom). NousDB at sf 10,000, rest at sf 1,000) latency per query (log scale).}
  \label{fig:job-stack}
  \vspace{-15pt}
\end{figure}

\subsection{Batch Sizing and Batch Fidelity}\label{sec:rq2}

\begin{figure}[t]
  \centering
  \includegraphics[width=0.9\columnwidth]{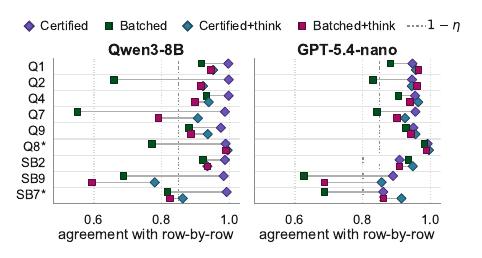}
  \vspace{-18pt}
  \caption{\small Share of the input tuples on which Certified and Batched execution return the singleton answer. Batch sizes as in \Cref{fig:certified-k}.}
  \label{fig:invariance}
  \vspace{-10pt}
\end{figure}

\begin{figure}[t]
  \centering
  \begin{subfigure}{\columnwidth}
    \centering
    \includegraphics[width=\columnwidth]{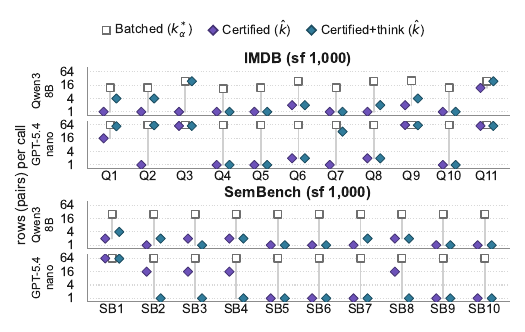}
    \vspace{-15pt}
    \caption{\small Rows per call of Batched ($k_{H_\tau}$) and of the certificate without and with thinking ($k_{safe}$)}
    \label{fig:certified-k}
  \end{subfigure}\\[3pt]
  \begin{subfigure}{\columnwidth}
    \centering
    \includegraphics[width=\columnwidth]{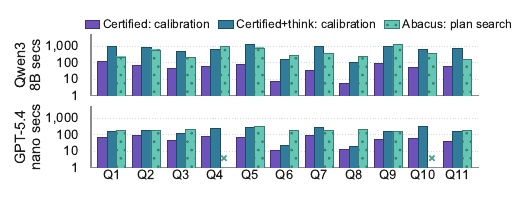}
    \vspace{-20pt}
    \caption{\small Seconds per IMDB query in the certificate's calibration prologue, without and with thinking, and in Abacus's plan search.}
    \label{fig:quality-overhead}
  \end{subfigure}
  \vspace{-5pt}
  \caption{\small What the fidelity certificate chooses and what it costs, per query at sf 1,000.}
  \label{fig:certificate}
  \vspace{-10pt}
\end{figure}

\textbf{Impact on token consumption.}
Histogram-guided batching alone (C1 in \Cref{tab:ablation}) cuts model calls to $\times0.05$--$0.17$ and tokens to $\times0.15$--$0.32$ of per-row execution on every model and benchmark.
The saving is on the input side: packing $k$ rows shares the instruction prefix while still the model generates an answer per input row, so the joins, whose per-pair prompts repeat the instructions for every pair, save most (Q3 with Qwen3-8B: 880\,K to 91\,K tokens).

\textbf{Impact on quality.}
On most queries batching has a small impact o overall quality: over the 36 pairs of \Cref{fig:invariance}, a batched variant departs from its per-row counterpart on 15\% of the tuples on average (median 10\%), with one extreme case, Q7 with Qwen3-8B (45\%).
Qwen3-8B is the more batch-sensitive model (17\% against 12\% for GPT-5.4-nano), and its quality degredation comes earlier, as shown in  \Cref{fig:distributions_and_sweep}.
The above behavior is regulated on both models when their reasoning capabilities are enabled.

\textbf{What the certificate guarantees and costs.}
Certified execution returns the per-row answer on 97.5--99.8\% of the tuples with Qwen3-8B and 86--99\% with GPT-5.4-nano, above $1-\eta$ in every cell, including where the predicted batch sizes where large ($k_{safe}=16$--64 with GPT-5.4-nano: 91--95\%).
At sf 2,000, \emph{Batched} falls below $1-\eta$ on 8 of 16 cells and \emph{Certified} on none.
With Qwen3-8B the certificate returns $k_{safe} \leq$ 2 on 10 of 11 and on IMDB queries and on every SemBench query.
GPT-5.4-nano, is more resistant to batching as a bigger model, where it keeps $k_{safe}$ between 16 and 64 on four queries (\Cref{fig:certified-k}).
The calibration prologue takes 6--118\,s (Qwen3-8B) and 11--97\,s (GPT-5.4-nano) per IMDB query, about half of a certified run, and does not grow with the input, whereas Abacus's plan search takes 157--1,259\,s (\Cref{fig:quality-overhead}).

\begin{figure}[t]
  \vspace{-2pt}
  \centering
  \includegraphics[width=\columnwidth]{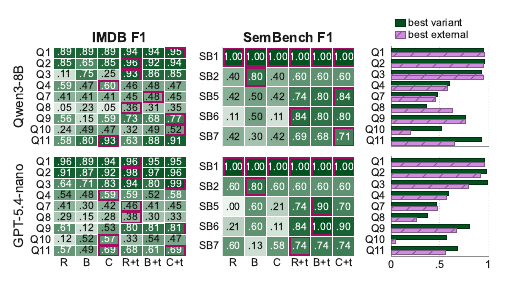}
  \caption{\small F1 per query at sf 1,000 of the six NousDB, per IMDB query and the best variant against the best of LOTUS, Palimpzest and Abacus (right).}
  \label{fig:arm-quality-map}
  \vspace{-15pt}
\end{figure}

\textbf{Quality of the variants.}
The choice of variant moves F1 as much as the choice of system (\Cref{fig:arm-quality-map}): on Q3 the variants span 0.11 \emph{(Row-by-row)} to 0.93 \emph{(Row-by-row+think)} with Qwen3-8B, and on Q9 0.12 \emph{(Batched)} to 0.81 \emph{(Certified+think)} with GPT-5.4-nano.
The best variant matches or beats the best external system on 6 of 9 IMDB queries with Qwen3-8B and 7 of 9 with GPT-5.4-nano, where it concedes only Q1 and Q7 to Palimpzest, by 0.004 and 0.016.
The one large gap is Q8 with Qwen3-8B (0.36 against Palimpzest's 0.63), which GPT-5.4-nano's best variant wins by 0.11.
Finally on SemBench, \emph{Batched} is the best variant on SB2 (0.80) with both models, and the thinking variants take the pair queries \emph{(Batched+think}: 0.90 on SB5 and 1.00 on SB6 with GPT-5.4-nano).


\subsection{Ablation Study}\label{sec:rq3}

\begin{table}[t]
  \centering
  \caption{\footnotesize Optimizations benefits. Top: geometric-mean ratio to C0 over each benchmark's ablation queries (below 1 = better). Bottom: answer quality of every cell (F1; SB3: relative error, lower is better; SB10: Spearman $\rho$; bold = best of the six configurations per model).}
  \label{tab:ablation}
  \vspace{-8pt}
  \scriptsize
  \setlength{\tabcolsep}{0pt}%
  \def\cc#1{{\setbox0\hbox{.000}\hbox to\wd0{\hss#1\hss}}}%
  \begin{tabular*}{\columnwidth}{@{}l@{\extracolsep{\fill}}*{6}{r}@{\hspace{2pt}}*{6}{r}@{\hspace{2pt}}*{6}{r}@{}}
  \toprule
   & \multicolumn{6}{c}{Qwen3-8B} & \multicolumn{6}{c}{Qwen3-32B} & \multicolumn{6}{c}{GPT-5.4-nano} \\
  \cmidrule(l{2pt}r{2pt}){2-7}\cmidrule(l{2pt}r{2pt}){8-13}\cmidrule(l{2pt}){14-19}
   & \cc{C0} & \cc{C1} & \cc{C2} & \cc{C3} & \cc{C4} & \cc{C5} & \cc{C0} & \cc{C1} & \cc{C2} & \cc{C3} & \cc{C4} & \cc{C5} & \cc{C0} & \cc{C1} & \cc{C2} & \cc{C3} & \cc{C4} & \cc{C5} \\
  \midrule
  \multicolumn{19}{@{}l}{\textit{IMDB: ratio to C0 (geometric mean)}} \\
  calls &  & 0.08 & 1.00 & 0.51 & 0.11 & 0.07 &  & 0.07 & 1.00 & 0.49 & 0.13 & 0.08 &  & 0.05 & 1.01 & 0.91 & 0.11 & 0.05 \\
  tokens &  & 0.26 & 1.00 & 0.68 & 0.35 & 0.26 &  & 0.26 & 1.00 & 0.65 & 0.37 & 0.29 &  & 0.29 & 1.00 & 0.99 & 0.44 & 0.25 \\
  wall &  & 1.39 & 0.59 & 0.69 & 1.08 & 0.74 &  & 1.22 & 0.95 & 0.84 & 0.80 & 0.57 &  & 0.25 & 0.24 & 1.06 & 0.24 & 0.18 \\
  \addlinespace[2pt]
  \multicolumn{19}{@{}l}{\textit{SemBench: ratio to C0 (geometric mean)}} \\
  calls &  & 0.14 & 1.03 & 0.89 & 0.16 & 0.17 &  & 0.17 & 0.84 & 0.93 & 0.18 & 0.18 &  & 0.09 & 1.04 & 1.00 & 0.09 & 0.09 \\
  tokens &  & 0.15 & 1.00 & 1.00 & 0.17 & 0.17 &  & 0.17 & 1.00 & 1.00 & 0.18 & 0.18 &  & 0.32 & 1.06 & 1.00 & 0.41 & 0.41 \\
  wall &  & 4.76 & 0.90 & 0.89 & 2.05 & 1.52 &  & 3.77 & 0.97 & 0.96 & 1.49 & 1.49 &  & 0.49 & 0.33 & 0.94 & 0.30 & 0.31 \\
  \midrule
  \multicolumn{19}{@{}l}{\textit{Answer quality}} \\
  Q1 & \cc{\textbf{.892}} & .883 & \cc{\textbf{.892}} & \cc{\textbf{.892}} & .888 & .890 & .917 & .922 & .916 & .916 & \cc{\textbf{.930}} & \cc{\textbf{.930}} & .927 & \cc{\textbf{.931}} & \cc{\textbf{.931}} & .930 & .895 & .892 \\
  Q3* & .114 & .484 & .114 & .396 & .370 & \cc{\textbf{.759}} & .776 & \cc{\textbf{.846}} & .783 & .792 & .796 & .818 & .677 & .537 & .608 & \cc{\textbf{.900}} & .800 & .718 \\
  Q4 & \cc{\textbf{.585}} & .438 & \cc{\textbf{.585}} & \cc{\textbf{.585}} & .456 & .456 & .595 & .583 & .595 & \cc{\textbf{.607}} & .513 & .513 & .500 & .538 & \cc{\textbf{.582}} & .537 & .526 & .500 \\
  Q7 & .411 & \cc{\textbf{.433}} & .413 & .411 & .414 & .410 & .460 & .429 & .462 & .464 & .455 & \cc{\textbf{.486}} & .368 & .292 & \cc{\textbf{.369}} & .366 & .226 & .316 \\
  Q11* & .582 & .421 & .582 & .611 & .435 & \cc{\textbf{.789}} & .686 & .568 & .686 & \cc{\textbf{.909}} & .495 & .835 & .525 & .421 & .571 & \cc{\textbf{.819}} & .500 & .467 \\
  \addlinespace[2pt]
  SB7* & \cc{\textbf{.428}} & .290 & \cc{\textbf{.428}} & \cc{\textbf{.428}} & .327 & .328 & \cc{\textbf{.456}} & .394 & .455 & .455 & .389 & .381 & .591 & .180 & \cc{\textbf{.604}} & .601 & .169 & .216 \\
  SB3 & \cc{\textbf{.111}} & .556 & \cc{\textbf{.111}} & \cc{\textbf{.111}} & .222 & .222 & .556 & .667 & .556 & .556 & \cc{\textbf{.111}} & \cc{\textbf{.111}} & .667 & .444 & .556 & .556 & \cc{\textbf{.111}} & \cc{\textbf{.111}} \\
  SB10 & .286 & .202 & .286 & .286 & .335 & \cc{\textbf{.349}} & .329 & .311 & .333 & .329 & .343 & \cc{\textbf{.352}} & .303 & .319 & .301 & .289 & .317 & \cc{\textbf{.358}} \\
  \bottomrule
  \end{tabular*}
  \vspace{-12pt}
\end{table}

We isolate the three optimizations on queries that cover every operator type on IMDB and SemBench at sf 1,000, adding also Qwen3-32B.
C0 is per-row execution, C1, C2 and C3 each enable one optimization from batching, token repartitioning and rewrites respectively, C4 = C1+C2, and C5 is full NousDB (\Cref{tab:ablation}).

\textbf{Batch sizing leads to significant cost reduction (C1).}
Batching alone cuts calls to $\times0.05$--$0.17$ and tokens to $\times0.15$--$0.32$ of C0 on every model and benchmark.
With GPT-5.4-nano the wall clock follows ($\times0.25$ on IMDB, $\times0.49$ on SemBench): the join Q3 packs 986 per-pair calls into 34, taking 62\,K tokens instead of 944\,K and 8\,s instead of 122\,s.
The self-hosted models however, get slower ($\times1.22$--$4.76$): vLLM's continuous batching already batches concurrent per-row requests, so packing mainly lengthens each request (Q4 with Qwen3-8B: 204\,s against 97\,s on the same ten partitions).

\textbf{Token repartitioning as a new shuffling technique (C2).}
Token Repartition shuffles rows, so in terms of tokens and calls is identical to C0.
Its gain is the granularity of the shuffle:
DataFusion’s round-robin partitioning operator shuffles record batches across partitions, while token repartitioning shuffles them based on their tokens.
GPT-5.4 shows big gains when the repartitioning is enables, while the exchange already dealt single rows to ten partitions (session batch size of one), and C2 stays within 6\% of C0: balance has little to fix.
The true gains of this operator come hand-to-hand when paired with batching, as we observe in C5.

\textbf{Rewrites change what is computed (C3).}
From query rewrites only IMDB is affected, since only Q7 applies filter pushdown, and Q11 fusion of semantic operators, leading to  20--36\% fewer calls, preserving or boosting F1 score.
The rest of the queries remain unchanged, leading to overall gains in all observed axes, hinting that the extended algebra and its properties play a significant role when integrated with traditional query engines.

\textbf{Repartitioning smooths batches (C4).}
On top of batching, repartitioning removes most of C1's wall-clock penalty on the self-hosted models (with Qwen3-8B, $\times1.39\to1.08$ on IMDB and $\times4.76\to2.05$ on SemBench) but returns part of the token consumption savings (IMDB: $\times0.35$--$0.44$, from $\times0.26$--$0.29$).
The operator deflate the partitions' token loads ($\lambda\le1.14$ on row inputs) and this s spread is what produces: $\times0.33$--$0.84$ speedups on the 12 Qwen pairs.
Thus, repartitioning operator is valuable only when the engine is capable of processing rows in batches, and 

\textbf{Putting all together (C5).}
Full NousDB contains all optimizations leading to the following reductions: $\times0.05$--$0.18$ on model calls, $\times0.17$--$0.41$ on token consumption and $\times0.18$--$0.31$ at wall clock with GPT-5.4-nano and $\times0.57$--$0.74$ on IMDB with Qwen.
On SemBench with Qwen, stays slower than C0 ($\times1.5$) through SB1 and SB3, which C0 finishes in 2--12\,s, but in this setting SB1 has a LIMIT operator on the query and the whole benchmark consists of single semantic operator queries, making the rewrites never applied.
On the joins Q3 and Q11, Qwen3-8B reaches its best F1 of the six configurations (0.759 and 0.789, from 0.114 and 0.582 in C0) at $\times0.10$ and $\times0.15$ of C0's tokens.
However, C5 is not the quality optimum everywhere: in 64-pair blocks GPT-5.4-nano's joins fall to 0.467 on Q11 and 0.216 on SB7, from 0.819 (C3) and 0.591 (C0), which is what the certificate guards against.

\subsection{Summary of Findings}\label{sec:findings}
From our evaluation we extracted the following findings:
\textbf{RQ1:} NousDB is the fastest system in every benchmark and model setting (up to $86\times$ and  $1.70$--$5.25\times$ in geometric mean on JOB and STACK), spends the fewest tokens on 40 of 42 query cells, and a NousDB configuration is the most accurate in each.
\textbf{RQ2:} Batch sizing cuts tokens to $\times0.15$--$0.32$ of per-row execution without a context overflow, but changes up to 45\% of the answers.
The certificate keeps at least 86\% of them, and reasoning, affordable only when batched, raises quality.
\textbf{RQ3:} Batch sizing carries the cost reduction, rewriting reduces the work itself, repartitioning lowers latency by spreading requests over all partitions, and the three compose.
All optimizations together smoothen the weaknesses caused from each one separately, making NousDB optimizations work cooperatively rather than competitively.
\textbf{RQ4:} Pushdowns remove 20--47\% of the tuples that reach a model and preserve the answers up to run-to-run noise, while fusion improves them (Q10's F1 doubles with Qwen3-8B and quadruples with GPT-5.4-nano).

\section{Conclusions}
We presented an extended algebra for semantic query processing by formalizing semantic operators that compose with relational algebra, enabling semantic query rewriting and a system that implements it, NousDB. 
We showed that semantic analytics fundamentally departs from classical optimization by shifting the unit of cost from rows to tokens, which introduces new challenges in batch sizing and parallel execution under token skew.
We addressed these challenges with a token-based cost model, statistical guarantees for safe and accurate batching, and token-aware repartitioning, integrated into a modern query engine. 
Future work includes extending the algebra with additional semantic operators, incorporating quality into cost models, and query optimization.




\bibliographystyle{ACM-Reference-Format}
\bibliography{refs}

@String{Computing = "Computing" }

@String{Computer = "{IEEE} Computer" }

@String{Chelsea = "Chelsea" }

@misc{openai2024gpt4technicalreport,
      title={GPT-4 Technical Report}, 
      author={OpenAI and Josh Achiam and Steven Adler and Sandhini Agarwal and Lama Ahmad and Ilge Akkaya and Florencia Leoni Aleman and Diogo Almeida and Janko Altenschmidt and Sam Altman and Shyamal Anadkat and Red Avila and Igor Babuschkin and Suchir Balaji and Valerie Balcom and Paul Baltescu and Haiming Bao and Mohammad Bavarian and Jeff Belgum and Irwan Bello and Jake Berdine and Gabriel Bernadett-Shapiro and Christopher Berner and Lenny Bogdonoff and Oleg Boiko and Madelaine Boyd and Anna-Luisa Brakman and Greg Brockman and Tim Brooks and Miles Brundage and Kevin Button and Trevor Cai and Rosie Campbell and Andrew Cann and Brittany Carey and Chelsea Carlson and Rory Carmichael and Brooke Chan and Che Chang and Fotis Chantzis and Derek Chen and Sully Chen and Ruby Chen and Jason Chen and Mark Chen and Ben Chess and Chester Cho and Casey Chu and Hyung Won Chung and Dave Cummings and Jeremiah Currier and Yunxing Dai and Cory Decareaux and Thomas Degry and Noah Deutsch and Damien Deville and Arka Dhar and David Dohan and Steve Dowling and Sheila Dunning and Adrien Ecoffet and Atty Eleti and Tyna Eloundou and David Farhi and Liam Fedus and Niko Felix and Simón Posada Fishman and Juston Forte and Isabella Fulford and Leo Gao and Elie Georges and Christian Gibson and Vik Goel and Tarun Gogineni and Gabriel Goh and Rapha Gontijo-Lopes and Jonathan Gordon and Morgan Grafstein and Scott Gray and Ryan Greene and Joshua Gross and Shixiang Shane Gu and Yufei Guo and Chris Hallacy and Jesse Han and Jeff Harris and Yuchen He and Mike Heaton and Johannes Heidecke and Chris Hesse and Alan Hickey and Wade Hickey and Peter Hoeschele and Brandon Houghton and Kenny Hsu and Shengli Hu and Xin Hu and Joost Huizinga and Shantanu Jain and Shawn Jain and Joanne Jang and Angela Jiang and Roger Jiang and Haozhun Jin and Denny Jin and Shino Jomoto and Billie Jonn and Heewoo Jun and Tomer Kaftan and Łukasz Kaiser and Ali Kamali and Ingmar Kanitscheider and Nitish Shirish Keskar and Tabarak Khan and Logan Kilpatrick and Jong Wook Kim and Christina Kim and Yongjik Kim and Jan Hendrik Kirchner and Jamie Kiros and Matt Knight and Daniel Kokotajlo and Łukasz Kondraciuk and Andrew Kondrich and Aris Konstantinidis and Kyle Kosic and Gretchen Krueger and Vishal Kuo and Michael Lampe and Ikai Lan and Teddy Lee and Jan Leike and Jade Leung and Daniel Levy and Chak Ming Li and Rachel Lim and Molly Lin and Stephanie Lin and Mateusz Litwin and Theresa Lopez and Ryan Lowe and Patricia Lue and Anna Makanju and Kim Malfacini and Sam Manning and Todor Markov and Yaniv Markovski and Bianca Martin and Katie Mayer and Andrew Mayne and Bob McGrew and Scott Mayer McKinney and Christine McLeavey and Paul McMillan and Jake McNeil and David Medina and Aalok Mehta and Jacob Menick and Luke Metz and Andrey Mishchenko and Pamela Mishkin and Vinnie Monaco and Evan Morikawa and Daniel Mossing and Tong Mu and Mira Murati and Oleg Murk and David Mély and Ashvin Nair and Reiichiro Nakano and Rajeev Nayak and Arvind Neelakantan and Richard Ngo and Hyeonwoo Noh and Long Ouyang and Cullen O'Keefe and Jakub Pachocki and Alex Paino and Joe Palermo and Ashley Pantuliano and Giambattista Parascandolo and Joel Parish and Emy Parparita and Alex Passos and Mikhail Pavlov and Andrew Peng and Adam Perelman and Filipe de Avila Belbute Peres and Michael Petrov and Henrique Ponde de Oliveira Pinto and Michael and Pokorny and Michelle Pokrass and Vitchyr H. Pong and Tolly Powell and Alethea Power and Boris Power and Elizabeth Proehl and Raul Puri and Alec Radford and Jack Rae and Aditya Ramesh and Cameron Raymond and Francis Real and Kendra Rimbach and Carl Ross and Bob Rotsted and Henri Roussez and Nick Ryder and Mario Saltarelli and Ted Sanders and Shibani Santurkar and Girish Sastry and Heather Schmidt and David Schnurr and John Schulman and Daniel Selsam and Kyla Sheppard and Toki Sherbakov and Jessica Shieh and Sarah Shoker and Pranav Shyam and Szymon Sidor and Eric Sigler and Maddie Simens and Jordan Sitkin and Katarina Slama and Ian Sohl and Benjamin Sokolowsky and Yang Song and Natalie Staudacher and Felipe Petroski Such and Natalie Summers and Ilya Sutskever and Jie Tang and Nikolas Tezak and Madeleine B. Thompson and Phil Tillet and Amin Tootoonchian and Elizabeth Tseng and Preston Tuggle and Nick Turley and Jerry Tworek and Juan Felipe Cerón Uribe and Andrea Vallone and Arun Vijayvergiya and Chelsea Voss and Carroll Wainwright and Justin Jay Wang and Alvin Wang and Ben Wang and Jonathan Ward and Jason Wei and CJ Weinmann and Akila Welihinda and Peter Welinder and Jiayi Weng and Lilian Weng and Matt Wiethoff and Dave Willner and Clemens Winter and Samuel Wolrich and Hannah Wong and Lauren Workman and Sherwin Wu and Jeff Wu and Michael Wu and Kai Xiao and Tao Xu and Sarah Yoo and Kevin Yu and Qiming Yuan and Wojciech Zaremba and Rowan Zellers and Chong Zhang and Marvin Zhang and Shengjia Zhao and Tianhao Zheng and Juntang Zhuang and William Zhuk and Barret Zoph},
      year={2024},
      eprint={2303.08774},
      archivePrefix={arXiv},
      primaryClass={cs.CL},
      url={https://arxiv.org/abs/2303.08774}, 
}

@article{touvron2023llama,
  title={Llama: Open and efficient foundation language models},
  author={Touvron, Hugo and Lavril, Thibaut and Izacard, Gautier and Martinet, Xavier and Lachaux, Marie-Anne and Lacroix, Timoth{\'e}e and Rozi{\`e}re, Baptiste and Goyal, Naman and Hambro, Eric and Azhar, Faisal and others},
  journal={arXiv preprint arXiv:2302.13971},
  year={2023}
}

@article{anthropic2024claude,
  title={The claude 3 model family: Opus, sonnet, haiku},
  author={Anthropic, AI},
  journal={Claude-3 Model Card},
  volume={1},
  number={1},
  pages={4},
  year={2024}
}

@inproceedings{liu2025palimpzest,
  title={Palimpzest: Optimizing ai-powered analytics with declarative query processing},
  author={Liu, Chunwei and Russo, Matthew and Cafarella, Michael and Cao, Lei and Chen, Peter Baile and Chen, Zui and Franklin, Michael and Kraska, Tim and Madden, Samuel and Shahout, Rana and others},
  booktitle={Proceedings of the Conference on Innovative Database Research (CIDR)},
  pages={2},
  year={2025}
}

@misc{patel2024semanticoperators,
      title={Semantic Operators: A Declarative Model for Rich, AI-based Analytics Over Text Data},
      author={Liana Patel and Siddharth Jha and Parth Asawa and Melissa Pan and Carlos Guestrin and Matei Zaharia},
      year={2024},
      eprint={2407.11418},
      archivePrefix={arXiv},
      primaryClass={cs.DB},
      url={https://arxiv.org/abs/2407.11418},
}

@article{10.1145/235968.233342,
author = {Poosala, Viswanath and Haas, Peter J. and Ioannidis, Yannis E. and Shekita, Eugene J.},
title = {Improved histograms for selectivity estimation of range predicates},
year = {1996},
issue_date = {June 1996},
publisher = {Association for Computing Machinery},
address = {New York, NY, USA},
volume = {25},
number = {2},
issn = {0163-5808},
url = {https://doi.org/10.1145/235968.233342},
doi = {10.1145/235968.233342},
journal = {SIGMOD Rec.},
month = jun,
pages = {294–305},
numpages = {12}
}

@inproceedings{10.1145/3626246.3653368,
author = {Lamb, Andrew and Shen, Yijie and Heres, Dani\"{e}l and Chakraborty, Jayjeet and Kabak, Mehmet Ozan and Hsieh, Liang-Chi and Sun, Chao},
title = {Apache Arrow DataFusion: A Fast, Embeddable, Modular Analytic Query Engine},
year = {2024},
isbn = {9798400704222},
publisher = {Association for Computing Machinery},
address = {New York, NY, USA},
url = {https://doi.org/10.1145/3626246.3653368},
doi = {10.1145/3626246.3653368},
booktitle = {Companion of the 2024 International Conference on Management of Data},
pages = {5–17},
numpages = {13},
location = {Santiago AA, Chile},
series = {SIGMOD '24}
}

@online{apache_arrow,
  author  = {{Apache Software Foundation}},
  title   = {{Apache Arrow}},
  year    = {2026},
  url     = {https://arrow.apache.org/},
  urldate = {2026-03-12}
}

@article{10.14778/2850583.2850594,
author = {Leis, Viktor and Gubichev, Andrey and Mirchev, Atanas and Boncz, Peter and Kemper, Alfons and Neumann, Thomas},
title = {How good are query optimizers, really?},
year = {2015},
issue_date = {November 2015},
publisher = {VLDB Endowment},
volume = {9},
number = {3},
issn = {2150-8097},
url = {https://doi.org/10.14778/2850583.2850594},
doi = {10.14778/2850583.2850594},
journal = {Proc. VLDB Endow.},
month = nov,
pages = {204–215},
numpages = {12}
}

@inproceedings{10.1145/3600006.3613165,
author = {Kwon, Woosuk and Li, Zhuohan and Zhuang, Siyuan and Sheng, Ying and Zheng, Lianmin and Yu, Cody Hao and Gonzalez, Joseph and Zhang, Hao and Stoica, Ion},
title = {Efficient Memory Management for Large Language Model Serving with PagedAttention},
year = {2023},
isbn = {9798400702297},
publisher = {Association for Computing Machinery},
address = {New York, NY, USA},
url = {https://doi.org/10.1145/3600006.3613165},
doi = {10.1145/3600006.3613165},
booktitle = {Proceedings of the 29th Symposium on Operating Systems Principles},
pages = {611–626},
numpages = {16},
location = {Koblenz, Germany},
series = {SOSP '23}
}

@article{10.14778/3503585.3503586,
author = {Han, Yuxing and Wu, Ziniu and Wu, Peizhi and Zhu, Rong and Yang, Jingyi and Tan, Liang Wei and Zeng, Kai and Cong, Gao and Qin, Yanzhao and Pfadler, Andreas and Qian, Zhengping and Zhou, Jingren and Li, Jiangneng and Cui, Bin},
title = {Cardinality estimation in DBMS: a comprehensive benchmark evaluation},
year = {2021},
issue_date = {December 2021},
publisher = {VLDB Endowment},
volume = {15},
number = {4},
issn = {2150-8097},
url = {https://doi.org/10.14778/3503585.3503586},
doi = {10.14778/3503585.3503586},
journal = {Proc. VLDB Endow.},
month = {dec},
pages = {752–765},
numpages = {14}
}

@article{eleet1,
  title={Eleet: Efficient learned query execution over text and tables},
  author={Urban, Matthias and Binnig, Carsten},
  journal={Proceedings of the VLDB Endowment},
  volume={17},
  number={13},
  pages={4867--4880},
  year={2024},
  publisher={VLDB Endowment}
}

@article{eleet2,
  title={Efficient learned query execution over text and tables [technical report]},
  author={Urban, Matthias and Binnig, Carsten},
  journal={arXiv preprint arXiv:2410.22522},
  year={2024}
}

@article{lotus,
  title={Semantic Operators: A Declarative Model for Rich, AI-based Data Processing},
  author={Patel, Liana and Jha, Siddharth and Pan, Melissa and Gupta, Harshit and Asawa, Parth and Guestrin, Carlos and Zaharia, Matei},
  journal={arXiv preprint arXiv:2407.11418},
  year={2024}
}

@article{approximate_selection_with_guarantees,
  title={Approximate selection with guarantees using proxies},
  author={Kang, Daniel and Gan, Edward and Bailis, Peter and Hashimoto, Tatsunori and Zaharia, Matei},
  journal={arXiv preprint arXiv:2004.00827},
  year={2020}
}

@article{palimpzest,
  title={A declarative system for optimizing ai workloads},
  author={Liu, Chunwei and Russo, Matthew and Cafarella, Michael and Cao, Lei and Chen, Peter Baille and Chen, Zui and Franklin, Michael and Kraska, Tim and Madden, Samuel and Vitagliano, Gerardo},
  journal={arXiv preprint arXiv:2405.14696},
  year={2024}
}

@article{blendsql,
  title={Blendsql: A scalable dialect for unifying hybrid question answering in relational algebra},
  author={Glenn, Parker and Dakle, Parag Pravin and Wang, Liang and Raghavan, Preethi},
  journal={arXiv preprint arXiv:2402.17882},
  year={2024}
}

@article{tag,
  title={Text2sql is not enough: Unifying ai and databases with tag},
  author={Biswal, Asim and Patel, Liana and Jha, Siddarth and Kamsetty, Amog and Liu, Shu and Gonzalez, Joseph E and Guestrin, Carlos and Zaharia, Matei},
  journal={arXiv preprint arXiv:2408.14717},
  year={2024}
}

@article{abacus,
  title={Abacus: A Cost-Based Optimizer for Semantic Operator Systems},
  author={Russo, Matthew and Sudhir, Sivaprasad and Vitagliano, Gerardo and Liu, Chunwei and Kraska, Tim and Madden, Samuel and Cafarella, Michael},
  journal={arXiv preprint arXiv:2505.14661},
  year={2025}
}

@article{Galois,
  title={Querying large language models with SQL},
  author={Saeed, Mohammed and De Cao, Nicola and Papotti, Paolo},
  journal={arXiv preprint arXiv:2304.00472},
  year={2023}
}

@article{zendb,
  title={Towards accurate and efficient document analytics with large language models},
  author={Lin, Yiming and Hulsebos, Madelon and Ma, Ruiying and Shankar, Shreya and Zeigham, Sepanta and Parameswaran, Aditya G and Wu, Eugene},
  journal={arXiv preprint arXiv:2405.04674},
  year={2024}
}

@inproceedings{imdb,
  title={Learning word vectors for sentiment analysis},
  author={Maas, Andrew and Daly, Raymond E and Pham, Peter T and Huang, Dan and Ng, Andrew Y and Potts, Christopher},
  booktitle={Proceedings of the 49th annual meeting of the association for computational linguistics: Human language technologies},
  pages={142--150},
  year={2011}
}

@inproceedings{wannadb1,
  title={WannaDB: Ad-hoc Structured Exploration of Text Collections Using Queries.},
  author={H{\"a}ttasch, Benjamin},
  booktitle={DESIRES},
  pages={179--180},
  year={2021}
}

@inproceedings{wannadb2,
  title={WannaDB: Ad-hoc SQL Queries over Text Collections},
  author={H{\"a}ttasch, Benjamin and Bodensohn, Jan-Micha and Vogel, Liane and Urban, Matthias and Binnig, Carsten},
  booktitle={BTW 2023},
  pages={157--181},
  year={2023},
  organization={Gesellschaft f{\"u}r Informatik eV}
}

@article{deepscholar,
  title={DeepScholar-Bench: A Live Benchmark and Automated Evaluation for Generative Research Synthesis},
  author={Patel, Liana and Arabzadeh, Negar and Gupta, Harshit and Sundar, Ankita and Stoica, Ion and Zaharia, Matei and Guestrin, Carlos},
  journal={arXiv preprint arXiv:2508.20033},
  year={2025}
}

@inproceedings{thalamusdb1,
  title={Demonstration of thalamusdb: answering complex sql queries with natural language predicates on multi-modal data},
  author={Jo, Saehan and Trummer, Immanuel},
  booktitle={Companion of the 2023 International Conference on Management of Data},
  pages={179--182},
  year={2023}
}

@article{thalamusdb2,
  title={Thalamusdb: Approximate query processing on multi-modal data},
  author={Jo, Saehan and Trummer, Immanuel},
  journal={Proceedings of the ACM on Management of Data},
  volume={2},
  number={3},
  pages={1--26},
  year={2024},
  publisher={ACM New York, NY, USA}
}

@article{uqe,
  title={UQE: A Query Engine for Unstructured Databases},
  author={Dai, Hanjun and Wang, Bethany and Wan, Xingchen and Dai, Bo and Yang, Sherry and Nova, Azade and Yin, Pengcheng and Phothilimthana, Mangpo and Sutton, Charles and Schuurmans, Dale},
  journal={Advances in Neural Information Processing Systems},
  volume={37},
  pages={29807--29838},
  year={2024}
}

@article{xmode,
  title={Explainable Multi-Modal Data Exploration in Natural Language via LLM Agent},
  author={Nooralahzadeh, Farhad and Zhang, Yi and Furst, Jonathan and Stockinger, Kurt},
  journal={arXiv preprint arXiv:2412.18428},
  year={2024}
}

@article{docetl,
  title={Docetl: Agentic query rewriting and evaluation for complex document processing},
  author={Shankar, Shreya and Chambers, Tristan and Shah, Tarak and Parameswaran, Aditya G and Wu, Eugene},
  journal={arXiv preprint arXiv:2410.12189},
  year={2024}
}

@article{swan,
  title={Hybrid querying over relational databases and large language models},
  author={Zhao, Fuheng and Agrawal, Divyakant and Abbadi, Amr El},
  journal={arXiv preprint arXiv:2408.00884},
  year={2024}
}

@article{andb,
  title={AnDB: Breaking Boundaries with an AI-Native Database for Universal Semantic Analysis},
  author={Wang, Tianqing and Xue, Xun and Li, Guoliang and Wang, Yong},
  journal={arXiv preprint arXiv:2502.13805},
  year={2025}
}

@article{agenticdata,
  title={Agenticdata: An agentic data analytics system for heterogeneous data},
  author={Sun, Ji and Li, Guoliang and Zhou, Peiyao and Ma, Yihui and Xu, Jingzhe and Li, Yuan},
  journal={arXiv preprint arXiv:2508.05002},
  year={2025}
}

@article{aryn,
  title={The design of an llm-powered unstructured analytics system},
  author={Anderson, Eric and Fritz, Jonathan and Lee, Austin and Li, Bohou and Lindblad, Mark and Lindeman, Henry and Meyer, Alex and Parmar, Parth and Ranade, Tanvi and Shah, Mehul A and others},
  journal={arXiv preprint arXiv:2409.00847},
  year={2024}
}

@article{docdb,
  title={DocDB: A Database for Unstructured Document Analysis},
  author={Li, Zequn and Zhong, Yuanhao and Chai, Chengliang and Sun, Zhaoze and Deng, Yuhao and Yuan, Ye and Wang, Guoren and Cao, Lei},
  journal={Proceedings of the VLDB Endowment},
  volume={18},
  number={12},
  pages={5387--5390},
  year={2025},
  publisher={VLDB Endowment}
}

@inproceedings{doctopus,
  title={Doctopus: A System for Budget-aware Structural Data Extraction from Unstructured Documents},
  author={Zhong, Yuanhao and Deng, Yuhao and Chai, Chengliang and Gu, Ruixin and Yuan, Ye and Wang, Guoren and Cao, Lei},
  booktitle={Companion of the 2025 International Conference on Management of Data},
  pages={275--278},
  year={2025}
}

@inproceedings{exasol,
  title={Exasol Text AI: In-Database Continuous Text Analytics},
  author={Beck, Mikhail and Knapp Corneli, Madeleine and Kilias, Torsten and Kre{\ss}, Marlene and Kuhnke, Christoph and N{\"a}tlitz, Marco and {\"U}bensee, Thomas},
  booktitle={INFORMATIK 2025},
  pages={1509},
  year={2025},
  organization={Gesellschaft f{\"u}r Informatik eV}
}

@article{swelldb,
  title={SwellDB: GenAI-Native Query Processing via On-the-Fly Table Generation},
  author={Giannakouris, Victor},
  journal={Proceedings of the VLDB Endowment. ISSN},
  volume={2150},
  pages={8097}
}

@article{unify,
  title={Unify: A System For Unstructured Data Analytics},
  author={Wang, Jiayi and Li, Yuan and Wu, Jianming and Xu, Shihui and Li, Guoliang},
  journal={Proceedings of the VLDB Endowment},
  volume={18},
  number={12},
  pages={5287--5290},
  year={2025},
  publisher={VLDB Endowment}
}

@article{tabagent,
  title={TabAgent: A Multi-Agent Table Extraction Framework for Unstructured Documents},
  author={Wu, Jingfei and Shen, Chaoyuan and Deng, Qiyan and Wang, Yuping and Li, Jiajun and Deng, Yuhao and Yu, Minghe},
  journal={Proceedings of the VLDB Endowment. ISSN},
  volume={2150},
  pages={8097}
}

@article{weaver,
  title={Weaver: Interweaving SQL and LLM for Table Reasoning},
  author={Khoja, Rohit and Gupta, Devanshu and Fu, Yanjie and Roth, Dan and Gupta, Vivek},
  journal={arXiv preprint arXiv:2505.18961},
  year={2025}
}

@article{scaledoc,
  title={ScaleDoc: Scaling LLM-based Predicates over Large Document Collections},
  author={Zhang, Hengrui and Hui, Yulong and Liu, Yihao and Zhang, Huanchen},
  journal={arXiv preprint arXiv:2509.12610},
  year={2025}
}

@misc{sembench,
      title={SemBench: A Benchmark for Semantic Query Processing Engines}, 
      author={Jiale Lao and Andreas Zimmerer and Olga Ovcharenko and Tianji Cong and Matthew Russo and Gerardo Vitagliano and Michael Cochez and Fatma Özcan and Gautam Gupta and Thibaud Hottelier and H. V. Jagadish and Kris Kissel and Sebastian Schelter and Andreas Kipf and Immanuel Trummer},
      year={2025},
      eprint={2511.01716},
      archivePrefix={arXiv},
      primaryClass={cs.DB},
      url={https://arxiv.org/abs/2511.01716}, 
}

@article{doi:10.1137/0117039,
author = {Graham, R. L.},
title = {Bounds on Multiprocessing Timing Anomalies},
journal = {SIAM Journal on Applied Mathematics},
volume = {17},
number = {2},
pages = {416-429},
year = {1969},
doi = {10.1137/0117039},

URL = { 
    
        https://doi.org/10.1137/0117039
    
    

},
eprint = { 
    
        https://doi.org/10.1137/0117039
    
    

}

}

@article{singh2025openai,
  title={Openai gpt-5 system card},
  author={Singh, Aaditya and Fry, Adam and Perelman, Adam and Tart, Adam and Ganesh, Adi and El-Kishky, Ahmed and McLaughlin, Aidan and Low, Aiden and Ostrow, AJ and Ananthram, Akhila and others},
  journal={arXiv preprint arXiv:2601.03267},
  year={2025}
}

@article{yang2025qwen3,
  title={Qwen3 technical report},
  author={Yang, An and Li, Anfeng and Yang, Baosong and Zhang, Beichen and Hui, Binyuan and Zheng, Bo and Yu, Bowen and Gao, Chang and Huang, Chengen and Lv, Chenxu and others},
  journal={arXiv preprint arXiv:2505.09388},
  year={2025}
}

@inproceedings{maas-etal-2011-learning,
    title = "Learning Word Vectors for Sentiment Analysis",
    author = "Maas, Andrew L.  and
      Daly, Raymond E.  and
      Pham, Peter T.  and
      Huang, Dan  and
      Ng, Andrew Y.  and
      Potts, Christopher",
    editor = "Lin, Dekang  and
      Matsumoto, Yuji  and
      Mihalcea, Rada",
    booktitle = "Proceedings of the 49th Annual Meeting of the Association for Computational Linguistics: Human Language Technologies",
    month = jun,
    year = "2011",
    address = "Portland, Oregon, USA",
    publisher = "Association for Computational Linguistics",
    url = "https://aclanthology.org/P11-1015/",
    pages = "142--150"
}

@inproceedings{10.5555/646975.711521,
author = {Raab, Martin and Steger, Angelika},
title = {"Balls into Bins" - A Simple and Tight Analysis},
year = {1998},
isbn = {354065142X},
publisher = {Springer-Verlag},
address = {Berlin, Heidelberg},
booktitle = {Proceedings of the Second International Workshop on Randomization and Approximation Techniques in Computer Science},
pages = {159–170},
numpages = {12},
series = {RANDOM '98}
}

@article{10.1145/3769776,
author = {Zeighami, Sepanta and Shankar, Shreya and Parameswaran, Aditya},
title = {Cut Costs, Not Accuracy: LLM-Powered Data Processing with Guarantees},
year = {2025},
issue_date = {December 2025},
publisher = {Association for Computing Machinery},
address = {New York, NY, USA},
volume = {3},
number = {6},
url = {https://doi.org/10.1145/3769776},
doi = {10.1145/3769776},
journal = {Proc. ACM Manag. Data},
month = dec,
articleno = {311},
numpages = {26}
}

@inproceedings{lin2024batchprompt,
  title={Batchprompt: Accomplish more with less},
  author={Lin, Jianzhe and Diesendruck, Maurice and Du, Liang and Abraham, Robin},
  booktitle={International Conference on Learning Representations},
  volume={2024},
  pages={21590--21612},
  year={2024}
}

@inproceedings{Cheng2023BatchPE,
  title={Batch Prompting: Efficient Inference with Large Language Model APIs},
  author={Zhoujun Cheng and Jungo Kasai and Tao Yu},
  booktitle={Conference on Empirical Methods in Natural Language Processing},
  year={2023},
  url={https://api.semanticscholar.org/CorpusID:256080546}
}

@inproceedings{10.1145/3696410.3714950,
author = {Cong, Peizhuang and Chen, Qizhi and Zhao, Haochen and Yang, Tong},
title = {BATON: Enhancing Batch-wise Inference Efficiency for Large Language Models via Dynamic Re-batching},
year = {2025},
isbn = {9798400712746},
publisher = {Association for Computing Machinery},
address = {New York, NY, USA},
url = {https://doi.org/10.1145/3696410.3714950},
doi = {10.1145/3696410.3714950},
booktitle = {Proceedings of the ACM on Web Conference 2025},
pages = {2309–2318},
numpages = {10},
location = {Sydney NSW, Australia},
series = {WWW '25}
}

@article{10.14778/3734839.3734853,
author = {Ji, Zhaoxuan and Wang, Xinlu and Luo, Zhaojing and Xie, Zhongle and Zhang, Meihui},
title = {Optimized Batch Prompting for Cost-Effective LLMs},
year = {2025},
issue_date = {March 2025},
publisher = {VLDB Endowment},
volume = {18},
number = {7},
issn = {2150-8097},
url = {https://doi.org/10.14778/3734839.3734853},
doi = {10.14778/3734839.3734853},
journal = {Proc. VLDB Endow.},
month = mar,
pages = {2172–2184},
numpages = {13}
}

@inproceedings{Liu2024CliqueParcelAA,
  title={CliqueParcel: An Approach For Batching LLM Prompts That Jointly Optimizes Efficiency And Faithfulness},
  author={Jia-Yi Liu and Tinghan Yang and Jennifer Neville},
  booktitle={SmaLLEXT@CIKM},
  year={2024},
  url={https://api.semanticscholar.org/CorpusID:267897588}
}

@article{Xu2026TowardsCL,
  title={Towards Cost-effective LLMs Routing with Batch Prompting},
  author={Haotian Xu and Kangfei Zhao and Jiadong Xie},
  journal={ArXiv},
  year={2026},
  volume={abs/2605.28268},
  url={https://api.semanticscholar.org/CorpusID:288701848}
}

@article{qi_vldb26,
        title = {{Sema: A High-performance System for LLM-based Semantic Query Processing}},
        author = {Qi, Kangkang and Xie, Dongyang and Li, Wenbo and Zhang, Hao and Zhu, Yuanyuan and Yu, Jeffrey Xu and Zhao, Kangfei},
        journal = {PVLDB},
        series = {{VLDB} '26},
        volume = {19},
        number = {11},
        pages = {3231--3244},
        doi = {10.14778/3836663.3836685},
        url = {https://doi.org/10.14778/3836663.3836685},
        year = {2026}
}

@article{10.1093/biomet/26.4.404,
    author = {CLOPPER, C. J. and PEARSON, E. S.},
    title = {THE USE OF CONFIDENCE OR FIDUCIAL LIMITS ILLUSTRATED IN THE CASE OF THE BINOMIAL},
    journal = {Biometrika},
    volume = {26},
    number = {4},
    pages = {404-413},
    year = {1934},
    month = {12},
    issn = {0006-3444},
    doi = {10.1093/biomet/26.4.404},
    url = {https://doi.org/10.1093/biomet/26.4.404},
    eprint = {https://academic.oup.com/biomet/article-pdf/26/4/404/823407/26-4-404.pdf},
}

\end{document}